\documentclass[11pt,english]{article}

\usepackage[T1]{fontenc}
\usepackage[latin9]{inputenc}
\usepackage{geometry}
\geometry{verbose,tmargin=2cm,bmargin=2.2cm,lmargin=2.2cm,rmargin=2.2cm}
\usepackage{color}
\usepackage{array}
\usepackage{verbatim}
\usepackage{float}
\usepackage{booktabs}
\usepackage{amsmath}
\usepackage{amsthm}
\usepackage{amssymb}
\usepackage{graphicx}
\usepackage{rotfloat}
\usepackage{setspace}
\usepackage[authoryear]{natbib}
\setstretch{1.55}

\makeatletter

\providecommand{\tabularnewline}{\\}

\theoremstyle{plain}
\newtheorem{thm}{\protect\theoremname}
\theoremstyle{plain}
\newtheorem{lem}{\protect\lemmaname}

\setcitestyle{round}
\usepackage{lscape}
\usepackage{longtable}


\usepackage[colorlinks,
            linkcolor=blue,
            anchorcolor=blue,
            citecolor=blue]{hyperref}

\usepackage{babel}
\providecommand{\lemmaname}{Lemma}
\providecommand{\theoremname}{Theorem}

\makeatother

\usepackage{babel}
\providecommand{\lemmaname}{Lemma}
\providecommand{\theoremname}{Theorem}

\begin{document}
\title{Realized GARCH, CBOE VIX, and the Volatility Risk Premium\thanks{Zhuo Huang acknowledges financial support by the National Natural
Science Foundation of China (71671004) and Tianyi Wang acknowledges
financial support by the National Natural Science Foundation of China
(71871060).}}
\author{\textbf{Peter Reinhard Hansen$^{a}$ \& Zhuo Huang$^{b}$ \& Chen
Tong}$^{c}$\textbf{ \& Tianyi Wang$^{d}$ }\bigskip{}
 \\
 \emph{\normalsize{}$^{a}$University of North Carolina \& Copenhagen
Business School}{\normalsize{}}\\
\emph{\normalsize{}$^{b}$Peking University, National School of Development}{\normalsize{}}\\
{\normalsize{}$^{c}$}\emph{\normalsize{}Department of Finance, School
of Economics \& Wang Yanan Institute} \emph{\normalsize{}for}\\
\emph{\normalsize{} Studies in Economics (WISE), Xiamen University}\\
 {\normalsize{}$^{d}$}\emph{\normalsize{}University of International
Business and Economics School of Banking and Finance\medskip{}
 }}
\date{\emph{\normalsize{}\today}}

\maketitle

\begin{abstract}
We show that the Realized GARCH model yields close-form expression
for both the Volatility Index (VIX) and the volatility risk premium
(VRP). The Realized GARCH model is driven by two shocks, a return
shock and a volatility shock, and these are natural state variables
in the stochastic discount factor (SDF). The volatility shock endows
the exponentially affine SDF with a compensation for volatility risk.
This leads to dissimilar dynamic properties under the physical and
risk-neutral measures that can explain time-variation in the VRP.
In an empirical application with the S\&P 500 returns, the VIX, and
the VRP, we find that the Realized GARCH model significantly outperforms
conventional GARCH models.

\bigskip{}
\end{abstract}
\textit{\small{}Keywords:}{\small{} Realized GARCH; High Frequency
Data; Volatility Risk Premium; Realized Variance; VIX.}{\small\par}

\noindent \textit{\small{}JEL Classification:}{\small{} C10; C22;
C80}{\small\par}

\noindent {\small{}\newpage}{\small\par}

\section{Introduction}

The variance risk premium or volatility risk premium (VRP) has been
the focus of much research since the seminal papers by \citet{CovalShumway2001}
and \citet{BK2003}. The VRP is the difference between the expected
return variance under the risk neutral measure and the expected return
variance under the physical measure,\footnote{This definition of the VRP follows that in \citet{BollerslevTauchenZhou2009}.
Other definitions are used in part of the literature.} where the former can be inferred from option prices. The leading
example is the VIX, which is a model-free measure of the expected
variance over the next 30 days under the risk neutral measure. Expectations
under the physical measure can be based on a suitable volatility model.\footnote{For instance a GARCH model or a reduced-form model for the realized
volatility. For alternative methods for computing the expected variance
under both the physical and risk neutral measures, see \citet{BollerslevTauchenZhou2009},
\citet{BollerMZ2011}, \citet{Wu_JOE2011} and \citet{Conrad_Loch2015}.} The VRP is a measure of volatility risk compensation and it is typically
positive. This is to be expected because large increases in volatility
tend to coincide with large negative returns. This relationship is
observed for a broad range of financial assets, see e.g. \citet{CovalShumway2001},
\citet{BK2003} and \citet{CW2009}. The VRP is also recognized as
a distinct risk factor that predicts both aggregate stock returns
and the cross-section of stock returns, see e.g. \citet{BollerslevTauchenZhou2009},
\citet{Bekaert2014}, and \citet{Cremers_HW2015}.%
{} Moreover, the VRP plays an important role in option pricing, see
e.g. \citet{Byun_JMY2015} and \citet{Song_Xiu2016}. Given its importance,
it is interesting to develop an econometric model that can explain
the time variation in the VRP while being coherent with other features
of the data.

Conventional GARCH models specify expectations under the physical
measure, $\mathbb{P}$, and additional structure is needed before
expectations can be computed under the risk neutral measure, $\mathbb{Q}$.
\citet{Duan1995} pioneered the use of GARCH models for option pricing
by introducing a locally risk-neutral valuation relationship (LRNVR).
The LRNVR defines a link between expected volatility under the $\mathbb{P}$
and expected volatility under $\mathbb{Q}$. With this additional
structure in place, GARCH models can be used to price options and
the corresponding VRP can be inferred. Unfortunately, standard GARCH
models combined with LRNVR cannot adequately explain the VRP, as shown
by \citet{HaoZhang2013}. They found that the VIX implied by GARCH
models is substantially below the observed VIX. \citet{HaoZhang2013}
explored if this shortcoming could be amended by modifying the objective
function to also target the VIX. Unfortunately, this leads to parameter
values (in the GARCH model) that contradict the empirical properties
under $\mathbb{P}$. In our empirical application, we also reach the
conclusion that GARCH models in conjunction with LRNVR cannot explain
the dynamic properties under both probability measures. A strong argument
for looking beyond standard GARCH models is provided by their diffusion
limits. These reveal that GARCH models are unable to fully compensate
for volatility risk, because GARCH models lack a separate volatility
shock variable. Stochastic volatility (SV) models, such as those by
\citet{Taylor1986} and \citet{Kim1998}, are better suited for this
situation because they have a dual-shock structure with distinct shocks
to returns and volatilities. This property facilitates a distinct
compensation for volatility risk, see e.g. \citet{BollerMZ2011}.
The main drawback of SV models is that they are more involved to estimate
than observation-driven models, such as GARCH models.\footnote{The same complication arises with Jump-GARCH models that typically
rely on particle filters for estimation, see e.g. \citet{Ornthanalai2014}. }

It is evident that the GARCH framework must be generalized in order
to become a coherent model of $\mathbb{P}$ and $\mathbb{Q}$. This
requires either a more flexible volatility model or a more sophisticated
risk neutralization method. In this paper, we pursue both extensions
by combining the Realized GARCH model with an exponentially affine
stochastic discount factor. This framework includes compensation for
both equity risk and volatility risk and it yields closed-form expressions
for both the VIX and the VRP. The Realized GARCH model is an observation-driven
model that conveniently has a dual shock structure that is similar
to that of SV models. This model is simple to estimate and easy to
combine with an\textcolor{blue}{{} }exponentially affine stochastic
discount factor. The parameter estimation can be adapted to include
VIX pricing errors in the objective function, similar to the estimation
method used in \citet{bardg2019}, see also \citet{AFT2019} who links
the realized volatility to volatility in an SV model. The estimated
model has several interesting properties. It delivers a higher level
of volatility, a higher volatility-of-volatility, and a stronger (more
negative) leverage correlation under $\mathbb{Q}$ than under $\mathbb{P}$.
The estimated model also generates higher levels of skewness and kurtosis
in cumulative returns under $\mathbb{Q}$ than under $\mathbb{P}$,
which are key determinants of the VRP, see \citet{Bakshi-Madan2006}
and \citet{Chabi-Yo2012}. The difference between log-volatility under
$\mathbb{P}$ and $\mathbb{Q}$ can conveniently be decomposed into
two terms, where the first term is compensation for equity risk through
the leverage effect and the second term is compensation for volatility-of-volatility. 

In an empirical analysis of daily S\&P 500 returns, realized volatilities,
and the VIX over 15 years, we compare the proposed model with a range
of alternative specifications. These include the EGARCH model by \citet{Nelson91},
the GARCH model by \citet{bollerslev:86}, Heston-Nandi GARCH by \citet{HN2000}.
These models are combined with either the the LRNVR by \citet{Duan1995}
or the variance dependent SDF by \citet{VDPK2013}.\footnote{\citet{VDPK2013}, introduced a variance-dependent SDF to improve
the option pricing performance of the Heston-Nandi GARCH model. The
idea was also used in \citet{Byun_JMY2015} in the context of Jump-GARCH
models.} We find that the new model has the best in-sample and out-of-sample
VIX pricing performance, and the proposed model does particularly
well during the turmoil period with the global financial crisis. We
find that the Realized GARCH model provides the best empirical fit
and, importantly, provides superior out-of-sample forecast of all
variables of interest.

The improved empirical results are driven by the inclusion of realized
volatility in the modeling. The help in two ways. First, the inclusion
improves volatility forecasts and this greatly improve the log-likelihood
of returns under $\mathbb{P}$. Second, the inclusion of a realized
volatility enables us to define a volatility shock that serves as
a second state variables in the SDF. This state variable characterizes
the compensation for volatility risk, which is important for explaining
key differences between $\mathbb{P}$ and $\mathbb{Q}$ and time-variation
therein.\footnote{The need for a model to simultaneously explain the variation under
$\mathbb{P}$ and $\mathbb{Q}$ was pointed out in \citet{Bates1996},
and has since received much attention in the option pricing literature,
see, e.g., \citet{PAN20023}, \citet{Eraker2004}, and \citet{SantaYan2010}. }

The remainder of this paper is organized as follows. We present the
Realized GARCH model, the risk-neutralization, and the model implied
VIX/VRP formula in Section 2 and discuss the distinct model dynamics
under $\mathbb{P}$ and $\mathbb{Q}$ in Section 3. In Section 4,
we present the set of competing models and we present our empirical
analysis in Section 5. We conclude in Section 6, present all proofs
in Appendix A, and present a range of empirical robustness checks
in Appendix B, which are based on different definitions of the VRP,
different error specification, and different choice of realized volatility
measure in the modeling.

\section{The Realized GARCH Model and VIX Pricing}

The Realized GARCH framework is a join model of returns and realized
volatility measures. Returns are modeled with a GARCH model, which
is augmented to include a realized measure of volatility, and the
Realized GARCH framework is characterized by a measurement equation
that ties the realized measure to the conditional variance. Realized
measures of volatility are computed from high frequency data where
the realized variance (RV) and the realized kernel (RK) by \citet{BHLS2008}
are prime examples. Realized GARCH models are generally found to outperform
conventional GARCH models in terms of modeling returns as well as
forecasting volatility. The reason is simply that the realized measures
provide more accurate measurements of volatility than daily returns,
and conventional GARCH models rely on the latter for ``updating''
the time-varying volatility. The Realized GARCH framework was introduced
by \citet{HansenHuangShek:2012} and later refined in \citet{RealizedEGARCH2016}
to have a more flexible leverage function and to allow for the inclusion
of multiple realized measures.

\subsection{Model under the Physical Measure}

We adapt the model in \citet{RealizedEGARCH2016} to the present context,
by adding an appropriate compensation for equity risk. Under the physical
measure the model is characterized by:
\begin{eqnarray}
r_{t} & = & r+\lambda\sqrt{h_{t}}-\tfrac{1}{2}h_{t}+\sqrt{h_{t}}z_{t},\label{eq:return.eq}\\
\log h_{t+1} & = & \omega+\beta\log h_{t}+\tau(z_{t})+\gamma\sigma u_{t},\label{eq:garch.eq}\\
\log x_{t} & = & \kappa+\phi\log h_{t}+\delta(z_{t})+\sigma u_{t},\label{eq:measurement.eq}
\end{eqnarray}
where $r_{t}$ is the logarithmic return, $\lambda$ is the price
of equity risk, $h_{t}=\mathrm{var}_{t-1}(r_{t})$ is the conditional
variance, $r$ is the risk-free interest rate, $z_{t}=(r_{t}-\mathbb{E}_{t-1}r_{t})/\sqrt{h_{t}}$
is the standardized return, and $x_{t}$ is the realized measure of
volatility. Our addition to this model framework, is a compensation
for equity risk that adds the term, $\lambda\sqrt{h_{t}}-\tfrac{1}{2}h_{t}$,
to the return equation (\ref{eq:return.eq}). The two random innovations,
$z_{t}$ and $u_{t}$, that are assumed to be independent and iid
standard Gaussian, $N(0,1)$. The quadratic functions $\tau(z)=\tau z+\tau_{2}(z^{2}-1)$
and $\delta(z)=\delta_{1}z+\delta_{2}(z^{2}-1)$ are \emph{leverage
functions} that capture dependence between return shocks and volatility
shocks. The parameter $\sigma$ can be interpreted as the volatility-of-volatility
shock. The model simplifies to a variant of the classical EGARCH\textcolor{black}{{}
model of \citet{Nelson91}} when $\gamma=0$.\footnote{We use $z_{t}^{2}$ in place of $|z_{t}|$ that was used the original
EGARCH model, which has some advantages, see \citet{HansenHuangShek:2012}
and \citet{RealizedEGARCH2016}. For completeness, we have also estimated
a Realized GARCH model with $\tau_{1}z_{t}+\tau_{2}(|z_{t}|-\sqrt{2/\pi})$,
which led to very similar qualitative and quantitative results.}

A key property of the model is that two shocks, $z_{t}$ and $u_{t}$,
are included in the GARCH equation, (\ref{eq:garch.eq}). This is
contrast to conventional GARCH models, where the conditional volatility
is solely driven by lagged daily returns.\footnote{For derivative pricing, this ``single shock'' structure is a serious
limitation because the equity risk premium parameter, $\lambda$,
must be increased to unreasonable high levels in order to explain
the variance risk premium, see \citet{HaoZhang2013} and our empirical
results in Table 2.} The dual-shock structure is important for describing the dynamic
properties under both $\mathbb{P}$ and $\mathbb{Q}$ simultaneously,
such that the dynamic properties of returns, the VIX, and the VRP,
can be explained within a unified coherent framework. 

\subsection{Risk Neutralization and Properties under the Risk Neutral Measure}

Before we can price the VIX, we need to state how the physical measure,
$\mathbb{P}$, relates to the risk neutral counterpart, $\mathbb{Q}$.
In the literature on option pricing with GARCH models, the most commonly
used risk neutralization methods are the locally risk-neutral valuation
relationship (LRNVR) by \citet{Duan1995} and the variance-dependent
SDF by \citet{VDPK2013}. These methods are applicable to a single-shock
GARCH models and do not apply to the dual-shock structure in our framework.
We will instead adopt an exponentially affine stochastic discount
factor, which has previously been used for risk neutralization with
multiple shocks in \citet{Corsi2013JFE}.

The stochastic discount factor, $M_{t+1}$, must satisfy $\mathbb{E}_{t}^{\mathbb{Q}}[X_{t+1}]=\mathbb{E}_{t}^{\mathbb{P}}[M_{t+1}X_{t+1}]$
for all asset prices, $X_{t+1}$. In the Realized GARCH framework
it is natural to use $z_{t+1}$ and $u_{t+1}$ as state variables,
and we will adopt the SDF define by:

\begin{equation}
M_{t+1}=\frac{\exp(-\lambda z_{t+1}-\xi u_{t+1})}{\mathbb{E}_{t}^{\mathbb{P}}\exp(-\lambda z_{t+1}-\xi u_{t+1})}=\exp\left\{ -\lambda z_{t+1}-\xi u_{t+1}-\tfrac{1}{2}(\lambda^{2}+\xi^{2})\right\} .\label{eq:SDF}
\end{equation}
Empirically, one would expect $\lambda$ to be positive and $\xi$
to be negative, which correspond to a positive equity premium and
a negative variance risk premium, respectively. It should be noted
that the parameter, $\lambda$, that appears in (\ref{eq:SDF}) is
identical to the $\lambda$ in the return equation, (\ref{eq:return.eq}).
This is not by assumption but an implication of a no arbitrage condition.
If we, as a starting point, permitted the $\lambda$ in (\ref{eq:SDF}),
to be a free and, possibly, time-varying parameter, then it can be
shown that this coefficient must be constant and equal to $\lambda$
in (\ref{eq:return.eq}). This is a consequence of a no-arbitrage
condition, see Lemma \ref{lem:LemConstantLambda} in the Appendix.

\subsubsection{Dynamic Properties under the Risk Neutral Measure}

Under the risk neutral measure, $\mathbb{Q}$, it follows that the
moment generating function (MGF) is given by 
\begin{eqnarray*}
\Psi(s_{1},s_{2})=\mathbb{E}_{t}^{\mathbb{Q}}[\exp(s_{1}z_{t+1}+s_{2}u{}_{t+1})] & = & \mathbb{E}_{t}^{\mathbb{P}}[M_{t+1}\exp(s_{1}z_{t+1}+s_{2}u_{t+1})]\\
 & = & \exp[-s_{1}\lambda-s_{2}\xi+\tfrac{1}{2}(s_{1}^{2}+s_{2}^{2})].
\end{eqnarray*}
This MGF is identical to $\mathbb{E}_{t}^{\mathbb{P}}[\exp(s_{1}z_{t+1}^{\ast}+s_{2}u_{t+1}^{\ast})]$,
where $z_{t+1}^{\ast}=z_{t+1}+\lambda$ and $u_{t+1}^{*}=u_{t+1}+\xi$,
and it implies the following dynamic model under the risk neutral
measure:{\small{}
\begin{eqnarray}
r_{t+1} & = & r-\tfrac{1}{2}h_{t+1}+\sqrt{h_{t+1}}z_{t+1}^{\ast},\label{eq:returnQ}\\
\log h_{t+1} & = & \tilde{\omega}+\beta\log h_{t}+\tilde{\tau}_{1}z_{t}^{\ast}+\tau_{2}(z_{t}^{\ast2}-1)+\gamma\sigma u_{t}^{\ast},\label{eq:GARCHq}\\
\log x_{t} & = & \tilde{\kappa}+\phi\log h_{t}+\tilde{\delta}_{1}z_{t}^{\ast}+\delta_{2}(z_{t}^{\ast2}-1)+\sigma u_{t}^{\ast},\label{eq:measureQ}
\end{eqnarray}
}where $(z_{t}^{\ast},u_{t}^{\ast})$ has bivariate Gaussian distribution,
$N(0,I)$, under $\mathbb{Q}$. The relationships between parameters
(under $\mathbb{P}$ and $\mathbb{Q}$) are: $\tilde{\omega}=\omega-\tau_{1}\lambda+\tau_{2}\lambda^{2}-\gamma\sigma\xi$,
$\tilde{\tau}_{1}=\tau_{1}-2\tau_{2}\lambda$, $\tilde{\kappa}=\kappa-\delta_{1}\lambda+\delta_{2}\lambda^{2}-\sigma\xi$,
and $\tilde{\delta}_{1}=\delta_{1}-2\delta_{2}\lambda$, see Lemma
\ref{LemRGunderQ} for details.

The mapping $(z_{t},u_{t})\mapsto(z_{t}^{\ast},u_{t}^{\ast})$ can
be viewed as a generalization of LRNVR to the bivariate case. In the
present context, this bivariate structure rely on the inclusion of
the realized measure in the modeling, which facilitates a more complex
dynamic structure than can be achieved with conventional GARCH models.
The conventional GARCH model emerges as a special case when $\gamma=0$
and $\xi=0$. In this situation, (\ref{eq:returnQ}) and (\ref{eq:GARCHq})
simplify to an EGARCH model with the change of measure, $z_{t+1}^{*}=z_{t+1}+\lambda$.
This reveals a close relation between the GARCH models with exponentially
affine SDF and the simple change of measure that was proposed by \citet{Duan1995}.
This connection appears to have been overlooked in the exiting literature.

\subsection{The VIX Pricing Formula}

The Chicago Board Options Exchange's (CBOE) VIX index is defined as
the square-root of the annualized expected variance over the next
30 calendar days, where the expectation is computed under the risk
neutral measure. We will model returns and realized measure using
daily data and we will use 22 trading days and 252 trading days to
represent a month and a year, respectively. The model-based VIX formula
is therefore given by
\[
\mathrm{VIX}_{t}^{\mathrm{model}}=\sqrt{\tfrac{252}{22}\sum_{k=1}^{22}\mathbb{E}_{t}^{\mathbb{Q}}(h_{t+k})}\times100,
\]
where the expression for $\mathbb{E}_{t}^{\mathbb{Q}}(h_{t+k})$ is
model specific. The combination of the Realized GARCH model and the
exponentially affine SDF leads to the following expression:
\begin{thm}
\label{theo:VIXpricingRG}For the Realized GARCH model (\ref{eq:return.eq})-(\ref{eq:measurement.eq})
and the SDF (\ref{eq:SDF}), the model-implied VIX is given by:{\small{}
\begin{equation}
\mathrm{VIX}_{t}^{\mathrm{RG}}=100\times\sqrt{\frac{252}{22}\left[h_{t+1}+\sum_{k=2}^{22}\left(\prod_{i=0}^{k-2}F_{i}\right)h_{t+1}^{\beta^{k-1}}\right]},\label{eq:VIXpricingRG}
\end{equation}
where $F_{i}=(1-2\beta^{i}\tau_{2})^{-1/2}\exp\left\{ \beta^{i}(\tilde{\omega}-\tau_{2})+\tfrac{1}{2}\beta^{2i}[\tfrac{\tilde{\tau}_{1}^{2}}{1-2\beta^{i}\tau_{2}}+\gamma^{2}\sigma^{2}]\right\} $,
with $\tilde{\omega}=\omega-\tau_{1}\lambda+\tau_{2}\lambda^{2}-\delta\sigma\xi$
and $\tilde{\tau}_{1}=\tau_{1}-2\lambda\tau_{2}$.}{\small\par}
\end{thm}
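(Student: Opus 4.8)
The plan is to compute $\mathbb{E}_{t}^{\mathbb{Q}}(h_{t+k})$ in closed form for each $k=1,\dots,22$ and substitute into the model-based VIX formula stated just before the theorem. I would work entirely under $\mathbb{Q}$, using the risk-neutral GARCH recursion (\ref{eq:GARCHq}) in which $(z_{s}^{\ast},u_{s}^{\ast})$ are iid $N(0,I)$. Since $h_{t+1}$ is $\mathcal{F}_{t}$-measurable, the $k=1$ term is immediate: $\mathbb{E}_{t}^{\mathbb{Q}}(h_{t+1})=h_{t+1}$, which accounts for the isolated leading term inside the bracket of (\ref{eq:VIXpricingRG}).

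For $k\geq 2$ I would unroll the log-variance recursion. Writing $g_{s}=\log h_{t+s}$ and collecting the innovation into $w_{s}=\tilde{\tau}_{1}z_{t+s}^{\ast}+\tau_{2}((z_{t+s}^{\ast})^{2}-1)+\gamma\sigma u_{t+s}^{\ast}$, equation (\ref{eq:GARCHq}) reads $g_{s+1}=\tilde{\omega}+\beta g_{s}+w_{s}$, so iterating from $g_{1}=\log h_{t+1}$ gives
\[
\log h_{t+k}=\beta^{k-1}\log h_{t+1}+\tilde{\omega}\sum_{i=0}^{k-2}\beta^{i}+\sum_{i=0}^{k-2}\beta^{i}w_{k-1-i}.
\]
Exponentiating and applying $\mathbb{E}_{t}^{\mathbb{Q}}$, the deterministic part $\beta^{k-1}\log h_{t+1}$ pulls out as $h_{t+1}^{\beta^{k-1}}$, and because the $w_{s}$ are independent across $s$ the expectation factorizes into a product over $i$ of $\mathbb{E}^{\mathbb{Q}}[\exp(\beta^{i}w_{s})]$. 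By the independence of $z^{\ast}$ and $u^{\ast}$, each such factor splits into a volatility-of-volatility piece $\mathbb{E}[\exp(\beta^{i}\gamma\sigma u^{\ast})]=\exp(\tfrac{1}{2}\beta^{2i}\gamma^{2}\sigma^{2})$ and a return-shock piece carrying the quadratic exponent.

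The only nonroutine step is the Gaussian integral with a quadratic exponent,
\[
\mathbb{E}[\exp(a Z+b Z^{2})]=(1-2b)^{-1/2}\exp\!\left(\frac{a^{2}}{2(1-2b)}\right),\qquad Z\sim N(0,1),
\]
valid when $b<\tfrac{1}{2}$. Applying it with $a=\beta^{i}\tilde{\tau}_{1}$ and $b=\beta^{i}\tau_{2}$ produces exactly the factor $(1-2\beta^{i}\tau_{2})^{-1/2}$ and the exponent $\beta^{2i}\tilde{\tau}_{1}^{2}/[2(1-2\beta^{i}\tau_{2})]$ appearing in $F_{i}$; the leftover deterministic pieces $\exp(\beta^{i}\tilde{\omega})$ (from the $\tilde{\omega}\sum\beta^{i}$ term) and $\exp(-\beta^{i}\tau_{2})$ (from the $-\tau_{2}$ centering) combine into $\exp\{\beta^{i}(\tilde{\omega}-\tau_{2})\}$. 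Collecting the factors yields $\mathbb{E}_{t}^{\mathbb{Q}}(h_{t+k})=h_{t+1}^{\beta^{k-1}}\prod_{i=0}^{k-2}F_{i}$, matching the summand in (\ref{eq:VIXpricingRG}). The point requiring care is the integrability condition $2\beta^{i}\tau_{2}<1$ for every $i$, which I would flag as holding under the maintained restriction $\tau_{2}<\tfrac{1}{2}$ together with $|\beta|<1$.

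Finally I would substitute the $k=1$ value and the $k\geq 2$ expressions into $\sqrt{\tfrac{252}{22}\sum_{k=1}^{22}\mathbb{E}_{t}^{\mathbb{Q}}(h_{t+k})}\times100$, separating the leading term from the sum, to recover the stated formula. I expect the main obstacle to be purely bookkeeping: keeping the index shift $i=k-1-j$ consistent so that the product $\prod_{i=0}^{k-2}F_{i}$ lines up with the power $h_{t+1}^{\beta^{k-1}}$, rather than any genuine analytic difficulty beyond the quadratic Gaussian integral above.
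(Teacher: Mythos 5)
Your proposal is correct and follows essentially the same route as the paper's proof: unroll the $\mathbb{Q}$-dynamics of $\log h$, factorize the expectation over the iid innovations, and evaluate each factor with the Gaussian identity $\mathbb{E}[\exp(aZ+bZ^{2})]=(1-2b)^{-1/2}\exp\{a^{2}/(2(1-2b))\}$ (the paper's Lemma \ref{lem:Eexp(ax+bxx)}). Your explicit flagging of the integrability condition $2\beta^{i}\tau_{2}<1$ is a small point of extra care that the paper leaves implicit.
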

The expression (\ref{eq:VIXpricingRG}) facilitates an easy comparison
of the model implied VIX with the actual VIX index, and it is analogous
to the expressions obtained for a range of conventional GARCH models
in \citet{HaoZhang2013}. The proof of Theorem \ref{theo:VIXpricingRG}
is given in Appendix \ref{sec:Appendix-of-Proofs}. 

\subsection{Volatility risk premium}

The literature has proposed several definitions of the VRP, see \citet{BollerslevTauchenZhou2009}.\footnote{See \citet{BollerslevTauchenZhou2009} for detailed discussion of
VRP measures, including ex-post and ex-ante measures, and measures
in units of variances and in units of volatilities.} We adopt the following definition,

\[
\mathrm{VRP}_{t}^{\mathtt{modelfree}}=\mathrm{VIX}_{t}-\sqrt{\tfrac{252}{22}\mathbb{E}_{t}^{\mathbb{P}}\left(\sum_{i=1}^{22}\mathrm{RVcc}_{t+i}\right)}\times100,
\]
where $\mathrm{RVcc}_{t}=\mathrm{RV}_{t}+r_{co,t}^{2}$, $\mathrm{RV}_{t}$
is the realized variance estimator for the hours with active trading
on day $t$, and $r_{co,t}^{2}$ is the squared overnight return,
which is computed from the closing price on day $t-1$ and the opening
price of day $t$. This difference between the observed VIX and the
expected realized measure of volatility (for the corresponding 22
trading days) is a model-free measure of the VRP. This is a theoretical
quantity, because the expectation operator depends on $\mathbb{P}$
that is unknown in practice. The following empirical VRP 

\[
\mathrm{VRP}_{t}^{\mathtt{market}}=\mathrm{VIX}_{t}-\sqrt{\tfrac{252}{22}\sum_{i=1}^{22}\mathrm{RVcc}_{t-i+1}}\times100,
\]
was proposed in \citet{BollerslevTauchenZhou2009}. This quantity
relies on the assumption that realized volatility follows a martingale
process, such that the expected monthly realized volatility is given
by the observed realized volatility over the most recent month. As
a robustness check, we also consider a second, alternative, empirical
measure, which is based on the heterogeneous autoregressive (HAR)
model by \citet{Corsi2009}, see Appendix \ref{Sec:RobustalterVRP}. 

Our model-implied VRP is simply 
\[
\mathrm{VRP}_{t}^{\mathtt{model}}=\left(\sqrt{\tfrac{252}{22}\sum_{k=1}^{22}\mathbb{E}_{t}^{\mathbb{Q}}(h_{t+k})}-\sqrt{\tfrac{252}{22}\sum_{k=1}^{22}\mathbb{E}_{t}^{\mathbb{P}}(h_{t+k})}\right)\times100,
\]
which is the annualized, one-month ahead, expected volatility using
the risk neutral measure less the corresponding quantity under the
physical probability measure.

\section{Key Model Properties under $\mathbb{P}$ and $\mathbb{Q}$}

In this section, we analyze the Realized GARCH model with the exponentially
affine SDF, under both $\mathbb{P}$ and $\mathbb{Q}$, and we derive
key properties of volatility, leverage, and returns under both measures.
These results provide theoretical insight about importance of various
model parameters and their interpretations. We derive the properties
under the assumption that the parameters in the Realized GARCH model
and the SDF satisfy: 
\[
|\beta|<1,\quad\lambda,\gamma,\sigma,\tau_{2},\delta_{2}>0,\quad\xi,\tau_{1},\delta_{1}<0.
\]
These inequalities guarantee the following properties: (a) the volatility
process is stationary, (b) the equity premium is positive, (c) the
volatility premium is negative, and (d) the model has a leverage effect.
The stated parameter restrictions are consistent with our empirical
results in Section \ref{sec:Empirical-Analysis}.

\subsection{Average Volatility}

For the average log-volatility (unconditional mean of $\log h_{t}$)
we have that
\[
\mathbb{E}^{\mathbb{P}}(\log h)=\frac{\omega}{1-\beta}\qquad\mbox{and}\qquad\mathbb{E}^{\mathbb{Q}}(\log h)=\frac{\omega-\tau_{1}\lambda+\tau_{2}\lambda^{2}-\gamma\sigma\xi}{1-\beta}.
\]
Given our assumptions above, it follows that the average log-volatility
is higher under the $\mathbb{Q}$-measure than under the $\mathbb{P}$-measure.
Thus, the logarithmic variant of the VRP, 

\begin{equation}
\mathbb{E}^{\mathbb{Q}}(\log h)\text{-}\mathbb{E}^{\mathbb{P}}(\log h)=\frac{-\tau_{1}\lambda+\tau_{2}\lambda^{2}-\gamma\sigma\xi}{1-\beta}=\frac{-\tau_{1}\lambda+\tau_{2}\lambda^{2}}{1-\beta}+\frac{-\gamma\sigma\xi}{1-\beta},\label{eq:vrp_decom}
\end{equation}
is positive. The logarithmic variant of the VRP was used in \citet{CW2009}
and \citet{AMMANN2013}.

From (\ref{eq:vrp_decom}) we see that the logarithmic VRP can be
decomposed into two terms. One that is driven by the equity risk premium
and the leverage effect and a second term that is driven by the compensation
for volatility risk and volatility-of-volatility due to volatility
shocks. Their relative contributions to the log VRP are given by $\frac{-\tau_{1}\lambda+\tau_{2}\lambda^{2}}{-\tau_{1}\lambda+\tau_{2}\lambda^{2}-\gamma\sigma\xi}$
and $\frac{-\gamma\sigma\xi}{-\tau_{1}\lambda+\tau_{2}\lambda^{2}-\gamma\sigma\xi}$
, respectively, which, conveniently, do not depend on $\beta$. 

It is worth noting that a positive VRP does not require the equity
risk premium to be positive if $\xi$ is sufficiently negative. The
literature typically finds $\lambda>0$, e.g. \citet{French1987},
but negative values have also been reported, see e.g. \citet{JLA2001}.

\subsection{Volatility-of-Volatility}

The volatility-of-volatility (in log-$h$) can be derived similarly
and is, under $\mathbb{P}$ and $\mathbb{Q}$, given by
\[
\mathrm{var}^{\mathbb{P}}(\log h)=\frac{\tau_{1}^{2}+2\tau_{2}^{2}+\gamma^{2}\sigma^{2}}{1-\beta}\qquad\mbox{and}\qquad\mathrm{var}^{\mathbb{Q}}(\log h)=\frac{(\tau_{1}-2\tau_{2}\lambda)^{2}+2\tau_{2}^{2}+\gamma^{2}\sigma^{2}}{1-\beta^{2}},
\]
respectively. Since $\tau_{1}<0$ it follows that their difference,
$\mathrm{var}^{\mathbb{Q}}(\log h)-\mathrm{var}^{\mathbb{P}}(\log h)=4\lambda(\lambda\tau_{2}^{2}-\tau_{1}\tau_{2})/(1-\beta^{2})$
is positive, such that the unconditional variance of volatility is
higher under $\mathbb{Q}$ than under $\mathbb{P}$.

\subsection{Dependence between Returns and Volatility (Leverage)}

The dependence between returns and volatility is another important
aspect of asset pricing. Here we follow \citet{Christoffersen2014}
and quantify this dependence using the conditional correlation between
$\log h_{t+1}$ and $r_{t}$, (leverage correlations) under $\mathbb{P}$
and $\mathbb{Q}$. Under $\mathbb{P}$ we have
\[
\mathrm{cov}_{t}^{\mathbb{P}}\left(\log h_{t+1},r_{t}\right)=\mathbb{E}_{t}^{\mathbb{P}}[(\tau_{1}z_{t}+\tau_{2}z_{t}^{2}+\gamma\sigma u_{t})z_{t}\sqrt{h_{t}}]=\tau_{1}\sqrt{h_{t}},
\]
such that the conditional correlation is 
\[
\rho_{\mathbb{P}}=\mathrm{corr}_{t}^{\mathbb{P}}\left(\log h_{t+1},r_{t}\right)=\frac{\tau_{1}}{\sqrt{\tau_{1}^{2}+2\tau_{2}^{2}+\gamma^{2}\sigma^{2}}}.
\]
Under the risk neutral measure, $\mathbb{Q}$, we find that $\mathrm{cov}_{t}^{\mathbb{Q}}\left(\log h_{t+1},r_{t}\right)=(\tau_{1}-2\tau_{2}\lambda)\sqrt{h_{t}}$
such that
\[
\rho_{\mathbb{Q}}=\mathrm{corr}_{t}^{\mathbb{Q}}\left(\log h_{t+1},r_{t}\right)=\frac{\tau_{1}-2\lambda\tau_{2}}{\sqrt{(\tau_{1}-2\tau_{2}\lambda)^{2}+2\tau_{2}^{2}+\gamma^{2}\sigma^{2}}}.
\]
These correlations are, as expected, both negative, and it can be
shown that $\rho_{\mathbb{Q}}^{2}-\rho_{\mathbb{P}}^{2}>0$, such
that the leverage effect is more pronounce under $\mathbb{Q}$ than
under $\mathbb{P}$.

\subsection{Skewness and Kurtosis of Multi-period Returns}

While VIX pricing only requires the expectations of future volatility,
many other problems, such as option pricing, require an accurate description
of the distribution of cumulative returns.\footnote{For example, \citet{DGS1999} provided a method to price options with
the skewness and kurtosis of cumulative returns.} Figure \ref{fig:SandK} presents the skewness and kurtosis of cumulative
returns for the Realized GARCH model, for cumulative returns spanning
a period from 1 to 250 days (approximately one year). For comparison,
we also include the corresponding results based on the EGARCH model.
The simulation designs for the two models are the estimates we obtained
in our empirical analysis, see Table 2. Because closed-form expressions
for skewness and kurtosis of cumulative returns are not readily available,
these results are based on simulation methods with 1,000,000 replications.
The first 750 days were discarded in each simulation in order to minimize
the influence of initial values.

The results in Figure \ref{fig:SandK} show that cumulative returns
are more left-skewed (have a more negative skewness) under $\mathbb{Q}$
than under $\mathbb{P}$, and the tails are also thicker (larger kurtosis)
under $\mathbb{Q}$ than under $\mathbb{P}$. This is true for both
the Realized GARCH model and the EGARCH model. However, the magnitude
of skewness and kurtosis is much larger for the Realized GARCH model
especially under the risk neutral measures. These features of the
Realized GARCH model are potentially important because theoretical
results in \citet{Bakshi-Madan2006} and \citet{Chabi-Yo2012} demonstrate
that the skewness and the kurtosis of the market index are key determinants
of the variance risk premium.
\begin{figure}[H]
\begin{centering}
\includegraphics[scale=0.5]{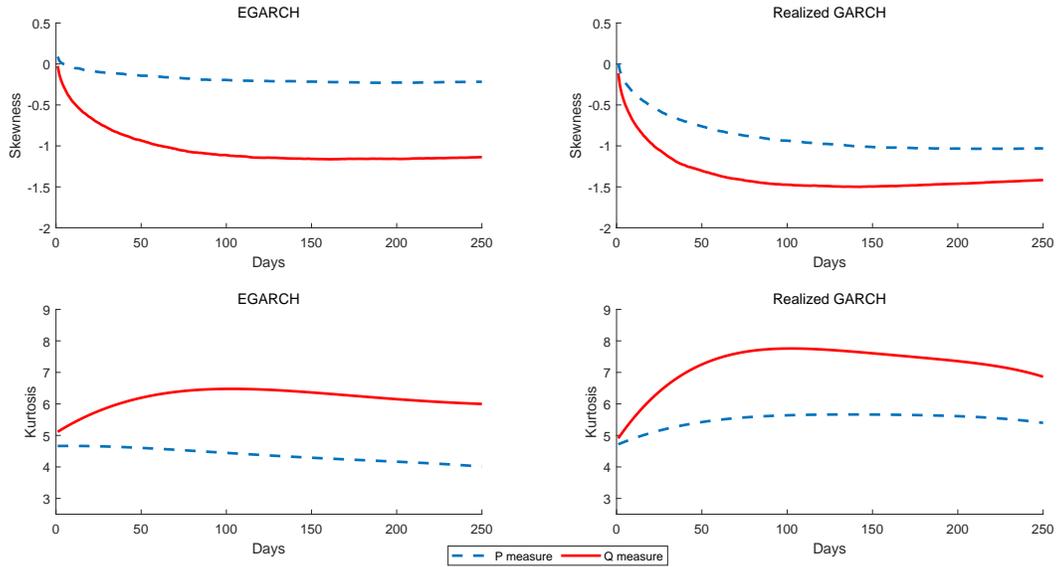}
\par\end{centering}
\caption{Skewness and kurtosis of cumulative returns under $\mathbb{P}$ and
$\mathbb{Q}$ for the EGARCH and the Realized GARCH model.\label{fig:SandK}}
\end{figure}

In Figure \ref{fig:SimDensity}, we present the simulated densities
for standardized cumulative returns over one month (left panels) and
six months (right panels). The densities under $\mathbb{P}$ are in
the upper panels and those under $\mathbb{Q}$ are presented in the
lower panels, where the solid red lines are for the Realized GARCH
model and the dashed blue line are for the EGARCH model based on the
parameter estimates we obtained in our empirical analysis. A left
skew can be seen for both models and it is more pronounced at longer
horizons (six months), especially for the Realized GARCH model. The
skewness is also more pronounced under the risk neutral measure, $\mathbb{Q}$,
which is consistent with the results in Figure \ref{fig:SandK}.
\begin{figure}[H]
\begin{centering}
\includegraphics[scale=0.5]{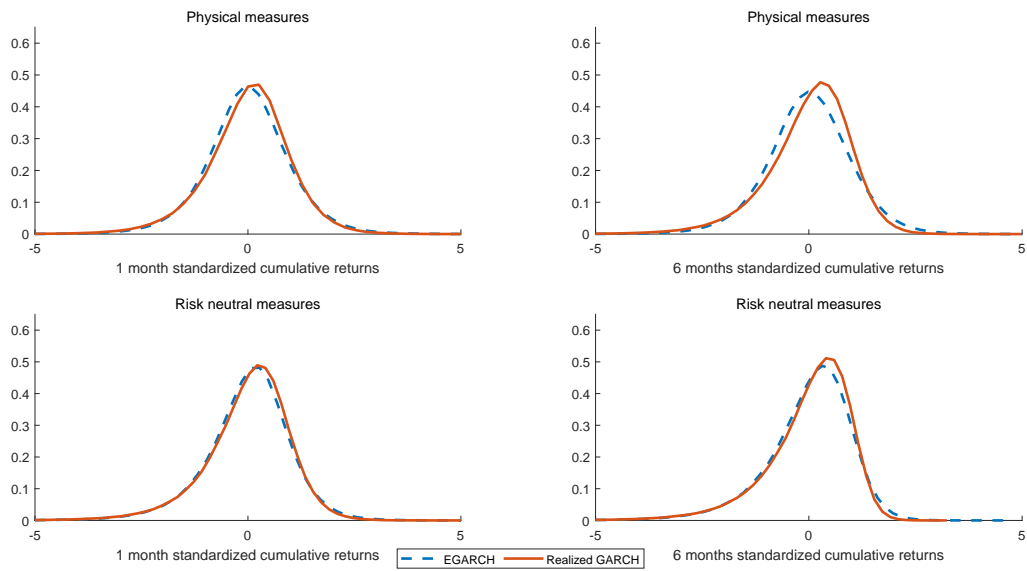}
\par\end{centering}
\caption{Density of Cumulative Returns via Simulation\label{fig:SimDensity}}
\end{figure}

\section{Some Competing Models and their Properties}

In this section, we introduce four alternative models, which we will
use to benchmark the Realized GARCH model against. We compare the
models ability to explain and predict the three variables: return
volatility, the market VIX, and the volatility risk premium. The four
alternative models are: the GARCH model by \citet{bollerslev:86},
the EGARCH model by \citet{Nelson91}, the Heston-Nandi GARCH model
by \citet{HN2000}, which are combined with Duan's LRNVR, and the
Heston-Nandi GARCH combined with the variance dependent SDF, as proposed
by \citet{VDPK2013}.\footnote{\citet{HaoZhang2013} examined GARCH, EGARCH, TGARCH, AGARCH and CGARCH
models. To conserve space, we focus on the GARCH and EGARCH models
because the EGARCH had the best performance in the study by \citet{HaoZhang2013},
and the original GARCH model is a natural benchmark.}

\subsection{GARCH and EGARCH Model}

\textcolor{black}{GARCH and EGARCH model are commonly used as benchmarks
in comparisons of volatility models. The original GARCH model tends
to perform well with exchange rate data, but it is typically outperformed
by models that can accommodate a leverage effect when applied to equity
returns, see \citet{HansenLundeBeatGarch}. The volatility dynamics
for the GARCH(1,1) is given by}
\[
h_{t+1}=\omega+\beta h_{t}+\alpha h_{t}z_{t}^{2},
\]
and that of the\textcolor{black}{{} EGARCH(1,1) is given by}
\begin{eqnarray*}
\log h_{t+1} & = & \omega+\beta\log h_{t}+\tau_{1}z_{t}+\tau_{2}(|z_{t}|-\sqrt{2/\pi}).
\end{eqnarray*}
From \citet[propositions 1-4]{HaoZhang2013} it follows that these
models in conjunction with the exponentially affine SDF yield the
following model-based pricing formulae for the VIX:{\small{}
\[
\mathrm{VIX}_{t}^{\mathrm{G}}=100\times\sqrt{252\sigma_{h}^{2}+\tfrac{252}{22}\frac{1-\beta^{22}}{1-\beta}(h_{t+1}-\sigma_{h}^{2})},\qquad\mbox{with}\quad\sigma_{h}^{2}=\omega/(1-\beta),
\]
}and

{\small{}
\[
\mathrm{VIX}_{t}^{\mathrm{EG}}=100\times\sqrt{\tfrac{252}{22}\left[h_{t+1}+\sum_{k=2}^{22}\left(\prod_{i=0}^{k-2}F_{i}\right)h_{t+1}^{\beta^{k-1}}\right]},
\]
}respectively, where{\small{} 
\begin{eqnarray*}
F_{i} & = & \exp\left[\beta\left(\omega-\tau_{2}\sqrt{\frac{2}{\pi}}\right)\right]\left\{ \exp\left[-\beta^{i}(\tau_{1}-\tau_{2})\lambda+\frac{\beta^{2i}(\tau_{2}-\tau_{1})^{2}}{2}\right]\Phi[\lambda-\beta^{i}(\tau_{1}-\tau_{2})]\right.\\
 &  & \left.+\exp\left[-\beta^{i}(\tau_{1}+\tau_{2})\lambda+\frac{\beta^{2i}(\tau_{2}+\tau_{1})^{2}}{2}\right]\Phi[\beta^{i}(\tau_{1}+\tau_{2})-\lambda]\right\} .
\end{eqnarray*}
}{\small\par}

\subsection{Heston-Nandi GARCH Model under LRNVR}

The Heston-Nandi GARCH model is a popular discrete-time model for
option pricing. The equity premium is assumed to be proportional to
the conditional variance and a specific leverage term is adopted in
the GARCH equation,
\begin{eqnarray*}
r_{t+1} & = & r+\lambda h_{t+1}-\tfrac{1}{2}h_{t+1}+\sqrt{h_{t+1}}z_{t+1},\\
h_{t+1} & = & \omega+\beta h_{t}+\alpha(z_{t}-\delta\sqrt{h_{t}})^{2}.
\end{eqnarray*}
This structure conveniently yields a closed-form option pricing formula.
Under LRNVR risk neutralization the corresponding dynamics under $\mathbb{Q}$
is:
\begin{eqnarray*}
r_{t+1} & = & r-\tfrac{1}{2}h_{t+1}+\sqrt{h_{t+1}}z_{t+1}^{\ast},\\
h_{t+1} & = & \omega+\beta h_{t}+\alpha(z_{t}^{\ast}-(\delta+\lambda)\sqrt{h_{t}})^{2},\\
 & = & \underset{=\tilde{\omega}}{\underbrace{\omega+\alpha}}+\underset{=\tilde{\beta}}{\underbrace{\left[\beta+\alpha(\delta+\lambda)^{2}\right]}}h_{t}-2\alpha(\delta+\lambda)\sqrt{h_{t}}z_{t}^{\ast}+\alpha(z_{t}^{\ast2}-1).
\end{eqnarray*}
where $z_{t}^{*}=z_{t}+\lambda\sqrt{h_{t}}$. If $|\tilde{\beta}|<1$,
then the unconditional mean of $h_{t}$ under $\mathbb{Q}$ is $\sigma_{h}^{2}=\tilde{\omega}/(1-\tilde{\beta})$,
and the $k$-step ahead expected conditional variance is

\[
\mathbb{E}_{t}^{\mathbb{Q}}(h_{t+k})=\sigma_{h}^{2}+\tilde{\beta}^{k-1}(h_{t+1}-\sigma_{h}^{2}).
\]
The model implied VIX pricing formula is therefore given by:

{\small{}
\[
\mathrm{VIX}_{t}^{\mathrm{HN}}=\sqrt{\tfrac{252}{22}\sum_{k=1}^{22}\mathbb{E}_{t}^{\mathbb{Q}}(h_{t+k})}\times100=\sqrt{252\sigma_{h}^{2}+\tfrac{252}{22}\tfrac{1-\tilde{\beta}^{22}}{1-\tilde{\beta}}(h_{t+1}-\sigma_{h}^{2})}\times100.
\]
}{\small\par}

\subsection{Heston-Nandi GARCH under Variance Dependent SDF}

An alternative to LRNVR is the variance dependent SDF by \citet{VDPK2013}.
As suggested by its name, this SDF depends on $h_{t}$, and this dependence
has been shown to improve the option pricing performance of the Heston-Nandi
GARCH model.

\citet{VDPK2013}, show that the dynamic properties under $\mathbb{Q}$
are given by
\begin{eqnarray*}
r_{t+1} & = & r-\tfrac{1}{2}h_{t+1}^{*}+\sqrt{h_{t+1}^{*}}z_{t+1}^{*},\\
h_{t+1}^{*} & = & \omega^{*}+\beta h_{t}^{*}+\alpha^{*}(z_{t}^{*}-\delta^{*}\sqrt{h_{t}^{*}})^{2},
\end{eqnarray*}
where $z_{t}^{*}$ has a standard normal distribution and 
\begin{eqnarray*}
h_{t}^{*} & = & h_{t}/(1+2\alpha\xi),\qquad\qquad\omega^{*}=\omega/(1+2\alpha\xi),\\
\alpha^{*} & = & \alpha/(1+2\alpha\xi)^{2},\qquad\quad\delta^{*}=(\lambda+\delta-\tfrac{1}{2})(1+2\alpha\xi)+\tfrac{1}{2}.
\end{eqnarray*}
Here $\xi$ is the variance risk aversion parameter, see \citet{VDPK2013}.
The resulting model-implied VIX pricing formula is given by

{\small{}
\[
\mathrm{VIX}_{t}^{\mathrm{HN}_{\mathtt{vd}}}=\sqrt{252\sigma_{h}^{*2}+\tfrac{252}{22}\tfrac{1-\beta^{\ast}{}^{22}}{1-\beta^{\ast}}(h_{t+1}^{*}-\sigma_{h}^{*2})},
\]
}where $\sigma_{h}^{*2}=(\omega^{\ast}+\alpha^{\ast})/(1-\beta^{\ast})$
with $\beta^{\ast}=\beta+\alpha^{\ast}\delta^{\ast2}$. Unlike LRNVR,
the variance dependent SDF will induce a transformation of one-step-ahead
conditional variance after the change of measure.\footnote{\citet{VDPK2013} suggested to reparametrize the model with $\tilde{\xi}=1/(1+2\alpha\xi)$
in place of $\xi$, when estimating the model.}

\section{Empirical Analysis\label{sec:Empirical-Analysis}}

For the estimation it is convenient to use a different expression
of the GARCH equation,
\begin{equation}
\log h_{t+1}=(\omega-\gamma\kappa)+(\beta-\gamma\phi)\log h_{t}+(\tau(z_{t})-\gamma\delta(z_{t}))+\gamma\log x_{t},\tag{{\ref{eq:garch.eq}'}}\label{eq:MSRG-measurement}
\end{equation}
 which is obtained by substituting (\ref{eq:measurement.eq}) into
(\ref{eq:garch.eq}). This formulation highlights the observation-driven
structure of the model, as it shows how the conditional volatility
depends on the observable realized measure and (a function of) the
lagged standardized return. This makes evaluation and maximization
of the log-likelihood straight forward.\footnote{For the related stochastic volatility models, direct maximization
of the likelihood for is typically impractical, and other estimation
methods, such as GMM and simulation based methods, are often employed
for this type of models.}

\subsection{Model Estimation}

We estimate the unknown parameters (of the model and the SDF) by maximizing
a joint log-likelihood function that is composed of the log-likelihood
function of the (Realized) GARCH model and the log-likelihood for
VIX pricing errors.

The log-likelihood function for the Realized GARCH model specifies
the dynamics for $(r_{t},x_{t})$ while the GARCH models (GARCH, EGARCH,
Heston-Nandi) specifies the dynamics for $r_{t}$. These likelihood
terms are combined with a log-likelihood for the VIX pricing errors,
where the latter is influenced by both the choice of volatility model
and the choice of SDF. Following \citet{HaoZhang2013} we adopt at
Gaussian specification for the pricing error, where $\mathrm{VIX}_{t}^{\mathrm{Model}}-\mathrm{VIX}_{t}\sim iidN(0,\sigma_{\mathtt{vix}}^{2})$,
and as a robustness check we also estimate the parameters using a
second (multiplicative) specification: $\mathrm{VIX}_{t}=\mathrm{VIX}_{t}^{\mathrm{Model}}\eta_{t}$,
where it is assumed that $\log\eta_{t}\sim iidN(-\sigma_{\mathtt{vix}}^{2}/2,\sigma_{\mathtt{vix}}^{2})$,
such that $\mathbb{E}(\eta_{t})=1$. The two specifications produce
very similar estimates and similar pricing errors, see Appendix \ref{Sec:RobustlogVIX}.

For the Realized GARCH model, the total (quasi) log-likelihood is
given by 
\[
\ell_{r}+\ell_{x}+\ell_{\mathtt{vix}},
\]
where 
\begin{eqnarray*}
\ell_{r} & = & -\tfrac{1}{2}\sum_{t=1}^{T}\{\log2\pi+\log h_{t}^{\mathtt{RG}}+(r_{t}-\mu_{t}^{\mathtt{RG}})^{2}/h_{t}^{\mathtt{RG}}\},\\
\ell_{x} & = & -\tfrac{1}{2}\sum_{t=1}^{T}\{\log2\pi+\log\sigma^{2}+[\log x_{t}-\omega-\beta\log h_{t}^{\mathtt{RG}}-\delta(z_{t})]^{2}/\sigma^{2}\},\\
\ell_{\mathtt{vix}} & = & -\tfrac{1}{2}\sum_{t=1}^{T}\{\log2\pi+\log\sigma_{\mathtt{vix}}^{2}+(\mathrm{VIX}_{t}^{\mathtt{RG}}-\mathrm{VIX}_{t})^{2}/\sigma_{\mathtt{vix}}^{2}\},
\end{eqnarray*}
with $h_{t}^{\mathtt{RG}}$ given from the GARCH equation (\ref{eq:garch.eq})
and $\mu_{t}^{\mathtt{RG}}=r+\lambda\sqrt{h_{t}^{\mathtt{RG}}}-\tfrac{1}{2}h_{t}^{\mathtt{RG}}$.

The likelihood of the other models are define similarly with model-specific
definitions of $\mu_{t}$, $h_{t}$, and the model-implied VIX. The
conventional GARCH models do not have the second term of the log-likelihood,
because they do include the realized measure, $x_{t}$, in the modeling.

The idea of combining the likelihood of a time-series model with a
second likelihood for option pricing errors is now standard in this
literature. Some papers including pricing errors for the a range of
options, see e.g. \citet{VDPK2013} and \citet{Christoffersen2014},
or pricing errors for volatility derivatives, see e.g. \citet{Wang2017},
\citet{bardg2019}, or VIX pricing errors as in \citet{HaoZhang2013}.
This is in contrast to an earlier literature that implicitly assumed
pricing errors to be zero and adopted the VIX as the volatility variables,
see, e.g., \citet{Duan2010JEDC} who estimated a stochastic volatility
model with jumps by exploiting the theoretical link between the VIX
and the latent volatility.

\subsection{Data}

Our empirical analysis is based on daily data for S\&P 500 stock index
and CBOE VIX. We obtain the daily VIX index and the daily returns
from Yahoo Finance while the realized measure are downloaded from
the Realized Library at Oxford-Man Institute. The primary realized
measure is the realized variance from the hours with active trading
with the squared overnight return added, see \citet{HansenLundeWholeDay}.
As another robustness check of our main results, we have also used
different choices of realized measures, see Appendix \ref{Sec:Robustalterrms}.

Our full sample spans 15 years, from January 2004 to December 2018.
We will present empirical results based on the full sample period
as well as out-of-sample results where the model is estimated recursively
using a rolling window sample with 750 days. The out-of-sample performance
is evaluated over the years 2007 to 2018. We also present separate
out-of-sample results for two subsamples: the years 2007-2012, which
include the global financial crisis period, and the years 2013-2018,
which span the post-crisis period.

\subsubsection{CBOE VIX calculation}

The VIX index is a model-free measure of volatility. Prior to being
annualized, it is computed as
\[
\mathrm{VIX}_{t}=\sqrt{\frac{2}{T}\sum_{i}\frac{\Delta K_{i}}{K_{i}^{2}}\exp\left(rT\right)Q\left(K_{i}\right)-\frac{1}{T}\left(\frac{F}{K_{0}}-1\right)^{2}},
\]
where $T$ is the time to maturity, $F$ is the forward index level,
$K_{0}$ is the first strike below $F$, $K_{i}$ is the strike price
of the $i$-th out-of-the-money option, $\Delta K_{i}$ is the interval
between strike prices, $r$ is the risk-free rate associated with
time to maturity, $Q(K_{i})$ is the midpoint of the bid-ask spread
for options with strike $K_{i}$. See \citet{JiangTian2005} for a
detail discussion on the VIX formula, and \citet{VIXreview2015} for
a review of model-free measures.\footnote{The VIX formula is described in the CBOE white paper, http://www.cboe.com/micro/vix/vixwhite.pdf,
and is based on earlier results in, \citet{CarrMadan1998}, \citet{DDKZ1999},
and \citet{BN2000}, who applied similar methods to approximate the
expected volatility under $\mathbb{Q}$.}
\begin{table}[H]
\caption{Summary of Statistics \label{Summary_of_statistics}}

\medskip{}

\begin{centering}
\textcolor{black}{}%
\begin{tabular}{r@{\extracolsep{0pt}.}lcccccc}
\toprule 
\multicolumn{2}{c}{} & \textcolor{black}{\small{}Mean} & \textcolor{black}{\small{}Median} & \textcolor{black}{\small{}Min} & \textcolor{black}{\small{}Q1} & \textcolor{black}{\small{}Q3} & \multicolumn{1}{c}{\textcolor{black}{\small{}Max}}\tabularnewline
\midrule 
\multicolumn{2}{c}{\textcolor{black}{\small{}$VIX$}} & \textcolor{black}{\small{}18.410} & \textcolor{black}{\small{}15.690} & \textcolor{black}{\small{}9.140} & \textcolor{black}{\small{}12.940} & \textcolor{black}{\small{}20.948} & \textcolor{black}{\small{}80.860}\tabularnewline
\multicolumn{2}{c}{\textcolor{black}{\small{}$\text{\ensuremath{\sqrt{AnnRV}}}$}} & \textcolor{black}{\small{}13.573} & \textcolor{black}{\small{}10.726} & \textcolor{black}{\small{}4.211} & \textcolor{black}{\small{}8.517} & \textcolor{black}{\small{}15.157} & \textcolor{black}{\small{}73.553}\tabularnewline
\multicolumn{2}{c}{\textcolor{black}{\small{}$Ret(\%)$}} & \textcolor{black}{\small{}0.017} & \textcolor{black}{\small{}0.058} & \textcolor{black}{\small{}-9.127} & \textcolor{black}{\small{}-0.378} & \textcolor{black}{\small{}0.485} & \textcolor{black}{\small{}10.246}\tabularnewline
\multicolumn{2}{c}{\textcolor{black}{\small{}$VRP$}} & \textcolor{black}{4.837} & \textcolor{black}{4.737} & \textcolor{black}{\small{}-25.284} & \textcolor{black}{\small{}3.147} & \textcolor{black}{\small{}6.510} & \textcolor{black}{\small{}28.316}\tabularnewline
\midrule 
\multicolumn{2}{c}{} & \textcolor{black}{\small{}Std} & \textcolor{black}{\small{}Skew} & \textcolor{black}{\small{}Kurt} & \textcolor{black}{\small{}AR1} & \textcolor{black}{\small{}AR10} & \textcolor{black}{\small{}AR22}\tabularnewline
\midrule
\multicolumn{2}{c}{\textcolor{black}{\small{}$VIX$}} & \textcolor{black}{\small{}8.812} & \textcolor{black}{\small{}2.657} & \textcolor{black}{\small{}12.662} & \textcolor{black}{\small{}0.980} & \textcolor{black}{\small{}0.898} & \textcolor{black}{\small{}0.810}\tabularnewline
\multicolumn{2}{c}{\textcolor{black}{\small{}$\text{\ensuremath{\sqrt{AnnRV}}}$}} & \textcolor{black}{\small{}8.757} & \textcolor{black}{\small{}3.082} & \textcolor{black}{\small{}16.113} & \textcolor{black}{\small{}0.998} & \textcolor{black}{\small{}0.924} & \textcolor{black}{\small{}0.771}\tabularnewline
\multicolumn{2}{c}{\textcolor{black}{\small{}$Ret(\%)$}} & \textcolor{black}{\small{}1.094} & \textcolor{black}{\small{}-0.410} & \textcolor{black}{\small{}15.224} & \textcolor{black}{\small{}-0.091} & \textcolor{black}{\small{}0.030} & \textcolor{black}{\small{}0.039}\tabularnewline
\multicolumn{2}{c}{\textcolor{black}{\small{}$VRP$}} & \textcolor{black}{\small{}3.300} & \textcolor{black}{\small{}-0.214} & \textcolor{black}{\small{}10.088} & \textcolor{black}{\small{}0.860} & \textcolor{black}{\small{}0.286} & \textcolor{black}{\small{}0.030}\tabularnewline
\bottomrule
\end{tabular}
\par\end{centering}
\textcolor{black}{\footnotesize{}Note: Variables are measure in percent
of annualized volatility. For instance, $\sqrt{AnnRV}=100\sqrt{\tfrac{252}{22}\sum_{i=1}^{22}\mathrm{RVcc}_{t-i+1}}$.}{\footnotesize\par}
\end{table}

We present summary statistics for the full sample period in Table
\ref{Summary_of_statistics}. The data consists of daily returns,
the daily realized variances (measured in units of annualized standard
deviation), the CBOE VIX, and the VRP. A number of interesting observation
can be made from Table \ref{Summary_of_statistics}. First, the distribution
of VIX is skewed to the right with one (or more) extremely large values,
and the same is seen for the realized volatility. Second, both time
series of volatility are highly persistent with large and slowly decaying
autocorrelations. Third, the VRP also has a large first-order autocorrelation
but its higher-order autocorrelations decay much faster than is the
case for the VIX and the realized variance. This suggest that the
two variables that the VRP is composed of, have a common stochastic
trend that cancels out in the difference between the two variables
(a type of ``cointegration''). On average, the VIX is larger than
the realized volatility, with the average VRP being around 4.8\%.
\begin{table}[!tbh]
\begin{centering}
{\small{}\caption{Parameter estimates (full sample){\label{tab:parameters}}}
}{\small\par}
\par\end{centering}
{\small{}\medskip{}
}{\small\par}
\begin{centering}
\begin{tabular}{c>{\centering}m{2.2cm}>{\centering}m{2.2cm}>{\centering}m{2.2cm}>{\centering}m{2.2cm}>{\centering}m{2.2cm}}
\toprule 
\textcolor{black}{\small{}Model} & {\small{}$\mathrm{RG}$} & {\small{}$\mathrm{EG}$} & {\small{}$\mathrm{G}$} & {\small{}$\mathrm{HN}$} & {\small{}$\mathrm{HN}_{\mathtt{vd}}$}\tabularnewline
\midrule 
\textcolor{black}{\small{}$\lambda$} & {\small{}0.015} & {\small{}0.153} & {\small{}0.305} & \textcolor{black}{\small{}4.518} & \textcolor{black}{\small{}9.128}\tabularnewline
 & {\small{}(0.010)} & {\small{}(0.004)} & {\small{}(0.018)} & \textcolor{black}{\small{}(0.259)} & \textcolor{black}{\small{}(1.635)}\tabularnewline
\textcolor{black}{\small{}$\omega$} & {\small{}-0.088} & {\small{}-0.086} & {\small{}1.60E-06} & \textcolor{black}{\small{}-1.44E-06} & \textcolor{black}{\small{}-1.39E-06}\tabularnewline
 & {\small{}(0.014)} & {\small{}(0.002)} & {\small{}(1.09E-07)} & \textcolor{black}{\small{}(9.27E-08)} & \textcolor{black}{\small{}(1.13E-07)}\tabularnewline
\textcolor{black}{\small{}$\beta$} & {\small{}0.991} & {\small{}0.990} & {\small{}0.940} & \textcolor{black}{\small{}0.870} & \textcolor{black}{\small{}0.895}\tabularnewline
 & {\small{}(0.001)} & {\small{}2.22E-04} & {\small{}(0.004)} & \textcolor{black}{\small{}(0.010)} & \textcolor{black}{\small{}(0.009)}\tabularnewline
\textcolor{black}{\small{}$\alpha$} &  &  & {\small{}0.054} & \textcolor{black}{\small{}3.10E-06} & \textcolor{black}{\small{}2.24E-06}\tabularnewline
 &  &  & {\small{}(0.004)} & \textcolor{black}{\small{}(1.50E-07)} & \textcolor{black}{\small{}(2.16E-07)}\tabularnewline
\textcolor{black}{\small{}$\delta$} &  &  &  & \textcolor{black}{\small{}197.183} & \textcolor{black}{\small{}202.167}\tabularnewline
 &  &  &  & \textcolor{black}{\small{}(12.852)} & \textcolor{black}{\small{}(18.782)}\tabularnewline
\textcolor{black}{\small{}$\tau_{1}$} & {\small{}-0.073} & {\small{}-0.062} &  &  & \tabularnewline
 & {\small{}(0.005)} & {\small{}(0.002)} &  &  & \tabularnewline
\textcolor{black}{\small{}$\tau_{2}$} & {\small{}0.012} & {\small{}0.096} &  &  & \tabularnewline
 & {\small{}(0.002)} & {\small{}(0.001)} &  &  & \tabularnewline
\textcolor{black}{\small{}$\gamma$} & {\small{}0.080} &  &  &  & \tabularnewline
 & {\small{}(0.009)} &  &  &  & \tabularnewline
\textcolor{black}{\small{}$\kappa$} & {\small{}0.427} &  &  &  & \tabularnewline
 & {\small{}(0.278)} &  &  &  & \tabularnewline
\textcolor{black}{\small{}$\phi$} & {\small{}1.078} &  &  &  & \tabularnewline
 & {\small{}(0.029)} &  &  &  & \tabularnewline
\textcolor{black}{\small{}$\delta_{1}$} & {\small{}-0.083} &  &  &  & \tabularnewline
 & {\small{}(0.010)} &  &  &  & \tabularnewline
\textcolor{black}{\small{}$\delta_{2}$} & {\small{}0.129} &  &  &  & \tabularnewline
 & {\small{}(0.010)} &  &  &  & \tabularnewline
\textcolor{black}{\small{}$\sigma^{2}$} & {\small{}0.325} &  &  &  & \tabularnewline
 & {\small{}(0.010)} &  &  &  & \tabularnewline
\textcolor{black}{\small{}$\xi$} & {\small{}-1.07} &  &  &  & \tabularnewline
 & {\small{}(0.130)} &  &  &  & \tabularnewline
{\small{}$\eta$} &  &  &  &  & {\small{}1.143}\tabularnewline
 &  &  &  &  & {\small{}(0.061)}\tabularnewline
\textcolor{black}{\small{}$\pi^{\mathbb{P}}$} & 0.991 & 0.990 & 0.993 & 0.990 & 0.986\tabularnewline
{\small{}$\ell_{r}$} & {\small{}12863.96} & 12661.65 & 12481.99 & 12610.18 & 12693.12\tabularnewline
{\small{}$\ell_{x}$} & {\small{}-3229.12} &  &  &  & \tabularnewline
\textcolor{black}{\small{}$\ell_{\mathtt{vix}}$} & {\small{}-8811.76} & -9362.61 & -9509.19 & -10144.56 & -10079.12\tabularnewline
\textcolor{black}{\small{}$\ell_{r,x}$} & {\small{}9634.84} &  &  &  & \tabularnewline
\textcolor{black}{\small{}$\ell_{r,\mathtt{vix}}$} & {\small{}4052.202} & 3299.045 & 2972.800 & 2465.615 & 2613.997\tabularnewline
\textcolor{black}{\small{}$\ell_{r,x,\mathtt{vix}}$} & 823.083 &  &  &  & \tabularnewline
\bottomrule
\end{tabular}
\par\end{centering}
{\small{}\medskip{}
}\textcolor{black}{\small{}Note: Parameter estimates are reported
with robust standard errors in parenthesis. }{\small{}The persistence
parameter }\textcolor{black}{\small{}$\pi^{\mathbb{P}}=\beta$}{\small{}
is in the $\mathrm{RG}$ and $\mathrm{EG}$ models, $\pi^{\mathbb{P}}=\alpha+\beta$
in the model}\textcolor{black}{\small{}, and $\pi^{\mathbb{P}}=\beta+\alpha\delta^{2}$
in the two Heston-Nandi models. For the }{\small{}$\mathrm{HN}_{\mathtt{vd}}$}\textcolor{black}{\small{}
model we report , $\eta=(1+2\alpha\xi)^{-1}$ (which is the the variance
risk ratio $h_{t}^{*}/h_{t}$) instead of $\xi$. (The implied value
for $\xi$ is here -55,768.95).}{\small\par}
\end{table}

\subsection{Model Estimates (Full Sample)}

In the following, we use the following abbreviations for the models:
$\mathrm{RG}$ for the Realized GARCH model, $\mathrm{EG}$ for EGARCH,
$\mathrm{G}$ for GARCH, $\mathrm{HN}$ for Heston-Nandi GARCH, and
$\mathrm{HN}_{\mathtt{vd}}$ for Heston-Nandi GARCH with the variance
dependent SDF.

We present the parameter estimates for the full sample period for
each of the five models in Table \ref{tab:parameters} along with
robust standard errors in parentheses and some additional statistics. 

An interesting observation can be made about the market price of equity
risk, $\lambda$. This parameter is similar for the first three models,
however the estimated of $\lambda$ in the EGARCH and GARCH models
are 10-20 times larger than the estimate for the Realized GARCH model.
The estimates of $\lambda$ in the EGARCH and GARCH models are in
line with those reported in \citet{HaoZhang2013}. If the model are
estimated solely from return data, then the estimates of $\lambda$
are much smaller, see \citet{HaoZhang2013}. The reason is that the
$\mathrm{EG}$ and $\mathrm{G}$ models lack a separate volatility
risk parameters, and the models inflate the value of $\lambda$ in
order to compensate for the volatility risk that is embedded in the
VIX. The $\lambda$ for the Heston-Nandi model is not directly comparable
to those of the other models, because this coefficient is associated
with $h$ in the return equation, rather than $\sqrt{h}$ (for the
other models).

The persistence parameters under the $\mathbb{P}$-measure is denoted
$\pi^{\mathbb{P}}$ and is defined by $\beta$ for RG and EG, by $\alpha+\beta$
for G, and by $\beta+\alpha\delta^{2}$ for $\mathrm{HN}$ and $\mathrm{HN}_{\mathtt{vd}}$.
The persistence is quite similar across models and close to unity
in all cases. The estimates of $\tau_{1}$ and $\delta_{1}$ are negative
for both RG model and the EG model, which reflect a negative correlation
between return and volatility shocks. This is the so-called leverage
effect and these findings are consistent with the existing literature.

The estimate of the volatility risk parameter in the RG model, $\xi$,
is negative and significant. From the decomposition of the (log) VRP
we can compute the relative contributions of the two terms in (\ref{eq:vrp_decom})
using the estimated RG model. The first term is compensation for the
equity risk premium and its contribution ($\propto-\tau_{1}\lambda+\tau_{2}\lambda^{2}$)
is estimated to be 2.2\%. The second term is the separate compensation
for volatility risk and its contribution ($\propto-\gamma\sigma\xi$)
is estimated to be 97.8\%. This suggests that the majority of VRP
is due to compensation for the volatility shock, $u_{t}$, and only
a small of fraction of the VRP can be attributed to the leverage effect
and the equity premium.\footnote{This finding is specific to the RG model structure, that only includes
a short-term leverage effect. So it is possible that that models with
a more sophisticated leverage effect and/or long memory feature, would
result in different weights on the two terms.} This empirical finding supports the view in \citet{HaoZhang2013}
who argued that equity risk cannot justify the observed market VRP.

The value of the maximized log-likelihood function is a measure of
the model's ability to fit the empirical distribution of the observed
data. The Realized GARCH model with the affine exponential SDF clearly
has the best fit for all terms of the log-likelihood that are directly
comparable. Both $\ell_{r}$ and $\ell_{\mathrm{vix}}$ and their
sum $\ell_{r,\mathrm{vix}}$, are much larger for the Realized GARCH
model than any of the other models. This is despite the fact that
the other models seek to maximize $\ell_{r,\mathrm{vix}}$ while the
objective of the Realized GARCH model entails a tradeoff between this
term and the log-likelihood for the realized measures, $\ell_{x}$.
Following the Realized GARCH model, the $\mathrm{HN}_{\mathtt{vd}}$
has the second best performance in terms of describing returns, $\ell_{r}$,
whereas the EGARCH takes the second spot in terms of explaining the
variation in the VIX, $\ell_{\mathrm{vix}}$. Below, we evaluate the
model's ability to describe the VIX in greater details. 

\subsection{Model Performance's for VIX, VRP, and Volatility}

In this section, we focus on the models' ability to explain the variation
in the VIX, VRP, and the volatility of cumulative returns. First,
we report summary statistics for the full sample, then we report results
for various subsamples -- in-sample results as well as out-of-sample
results. Most of the existing literature has focused on a single variable.
For instance, the focus in \citet{HaoZhang2013}, \citet{Christoffersen2014},
and \citet{Corsi2015JOE} was VIX and derivative pricing, whereas
\citet{Wang2017} focused on volatility under the physical measure.

\subsubsection{Comparison of market and model-based VIX}

In this section, we evaluate the model's ability to describe the VIX
in greater details beyond the log-likelihood term, $\ell_{\mathrm{vix}}$,
listed above. Table \ref{tab:summary} reports a range of summary
statistics based on the full sample, where we compare the model-based
measures of VIX with the observed VIX.
\begin{table}[!tbh]
\caption{VIX pricing performance (full sample){\label{tab:summary}}}

\begin{centering}
\medskip{}
{\small{} }%
\begin{tabular}{r@{\extracolsep{0pt}.}lcccccc}
\toprule 
\multicolumn{2}{c}{{\small{}Model}} & {\small{}$\mathrm{RG}$} & {\small{}$\mathrm{EG}$} & {\small{}$\mathrm{G}$} & {\small{}$\mathrm{HN}$} & {\small{}$\mathrm{HN_{vd}}$} & \multicolumn{1}{c}{{\small{}VIX}}\tabularnewline
\midrule 
\multicolumn{2}{c}{{\small{}Bias}} & {\small{}-0.048} & {\small{}-0.203} & {\small{}-0.318} & {\small{}-0.203} & {\small{}0.164} & \tabularnewline
\multicolumn{2}{c}{{\small{}MAE}} & {\small{}1.866} & {\small{}2.222} & {\small{}2.199} & {\small{}2.439} & {\small{}2.419} & \tabularnewline
\multicolumn{2}{c}{{\small{}RMSE}} & {\small{}2.504} & {\small{}2.898} & {\small{}3.012} & {\small{}3.565} & {\small{}3.504} & \tabularnewline
\multicolumn{2}{c}{{\small{}Corr}} & {\small{}0.959} & {\small{}0.945} & {\small{}0.941} & {\small{}0.916} & {\small{}0.919} & \tabularnewline
\multicolumn{2}{c}{{\small{}AR1}} & {\small{}0.994} & {\small{}0.994} & {\small{}0.996} & {\small{}0.993} & {\small{}0.992} & {\small{}0.980}\tabularnewline
\multicolumn{2}{c}{{\small{}AR10}} & {\small{}0.901} & {\small{}0.935} & {\small{}0.943} & {\small{}0.933} & {\small{}0.926} & {\small{}0.898}\tabularnewline
\multicolumn{2}{c}{{\small{}AR22}} & {\small{}0.775} & {\small{}0.836} & {\small{}0.841} & {\small{}0.854} & {\small{}0.836} & {\small{}0.810}\tabularnewline
\multicolumn{2}{c}{{\small{}Mean}} & {\small{}18.362} & {\small{}18.207} & {\small{}18.093} & {\small{}18.207} & {\small{}18.574} & {\small{}18.410}\tabularnewline
\multicolumn{2}{c}{{\small{}Var}} & {\small{}73.173} & {\small{}71.775} & {\small{}71.594} & {\small{}59.951} & {\small{}58.190} & {\small{}77.652}\tabularnewline
\multicolumn{2}{c}{{\small{}Skew}} & {\small{}2.560} & {\small{}2.560} & {\small{}3.556} & {\small{}1.714} & {\small{}1.779} & {\small{}2.657}\tabularnewline
\multicolumn{2}{c}{{\small{}Kurt}} & {\small{}12.301} & {\small{}12.282} & {\small{}18.909} & {\small{}6.580} & {\small{}6.862} & {\small{}12.662}\tabularnewline
\bottomrule
\end{tabular}{\small\par}
\par\end{centering}
\medskip{}
{\small{}Note: Summary statistics for the VIX errors, $e_{t}=\mathrm{VIX}_{t}^{\mathrm{model}}-\mathrm{VIX}_{t}^{\mathrm{market}}$.
We report the the sample average of $e_{t}$ (Bias), the mean absolute
errors (MAE), the root of mean squared errors (RMSE), the sample correlation
between $\mathrm{VIX}_{t}^{\mathrm{model}}$ and $\mathrm{VIX}_{t}^{\mathrm{market}}$
(Corr), and the sample autocorrelations of $e_{t}$ for lags 1, 10,
and 22, that are denoted AR1, AR10, and AR22, respectively. For $\mathrm{VIX}_{t}^{\mathrm{model}}$
we report its sample average (Mean), its sample variance (Var), its
sample skewness, (Skew), and its sample excess kurtosis (Kurt).}{\small\par}
\end{table}

In this comparison, the Realized GARCH model is also consistently
the best model. It has the smallest bias, the smallest mean squared
error, and the smallest mean absolute error. The models: EGARCH, GARCH,
and Heston-Nandi with LRNVR tend to underestimate the VIX, whereas
the Heston-Nandi GARCH with the variance dependent SDF tends to overestimate
the VIX. The Realized GARCH model also has the highest correlation
between the model-implied VIX and the market-based VIX. With the Realized
GARCH model, the resulting statistical properties of the model-based
VIX are closer to those of the market-based VIX, that those of other
models.

\subsubsection{In-Sample Comparison of the VRP and and Its Components}

Table \ref{tab:In-sample-fit} provides the in-sample pricing performance
for the variance risk premium and its two components: the volatility
index (VIX) and the annualized model-based volatility, and we report
the bias for each of the models.\footnote{The annualized volatility is calculated based on the martingale process
assumption made by \citet{BollerslevTauchenZhou2009} and our results
are robust when annualized volatility is calculated based on the forecast
value using HAR model (method used in \citet{Bekaert2014} etc.).
See section \ref{Sec:RobustalterVRP} for details.}

In terms of the volatility risk premium, the Realized GARCH model
provides the smallest bias, the smallest root mean square error (RMSE),
and the smallest mean absolute error (MAE). The reduction in pricing
errors relative to other models ranges from 15.0\% to 30.6\% in terms
of RMSE and 27.1\% to 44.7\% in terms of MAE. Among the competing
models, the Heston-Nandi GARCH model with variance dependent SDF appears
to be the best alternative. In contrast, the the Heston-Nandi model
with LRNVR, which is arguably a very popular option pricing model,
does not fair well in terms of explaining the volatility risk premium.
The EGARCH model performs significantly better than other GARCH models
using LRNVR, especially in terms of the RMSE.

We observe very similar patterns across models in terms of their ability
to price the VIX. The Realized GARCH delivers the best performance
while the Heston-Nandi GARCH takes last place. In fact, the non-affine
models (RG, EG and G) perform substantially better that the two affine
models ($\mathrm{HN}$ and $\mathrm{HN}_{\mathtt{vd}}$), which is
consistent with the existing literature on option pricing with GARCH
models. The main advantage of the affine models is their analytical
expressions for the moment generating function. Fortunately, these
are not needed for VIX pricing, so the non-affine model are clearly
preferred for this problem.\footnote{It is very difficult, if not impossible, to obtain an analytical moment
generation function for cumulative returns or the $k$-step ahead
conditional volatility for these non-affine models. Both are needed
for quasi-analytical pricing formula for derivatives using a Fourier
inverse transformation. For this reason, computationally intensive
simulation methods and analytical expansions are commonly used for
pricing derivatives with non-affine models, see \citet{RGOP2017}
for the use of an Edgeworth expansion to price options with the Realized
GARCH model.} The performance gain for the Realized GARCH model ranges from 13.6\%
to 29.8\% in terms of RMSE and between 15.1\% and 23.5\% in terms
of MAE.

All models tend to over-estimate the expected volatility under the
physical measure. However, the bias is much smaller for the Realized
GARCH model. This indicates that the other models, in order to price
the VIX, inadvertently increase the level of volatility to compensate
for their shortcomings in risk neutralization. The RG and $\mathrm{HN}_{\mathtt{vd}}$
both have additional parameter to compensate for volatility risk,
which likely explain their smaller bias. In terms of explaining the
annualized volatility, the Realized GARCH model reduces the RMSE by
32.8\% to 48.1\% and the MAE is reduced by 48.6\% to 58.6\%.

It is worth emphasizing that the parameter estimation does not target
the volatility risk premium directly, the superior performance of
Realized GARCH highlights the model's ability of reconcile the physical
and risk neutral dynamics within a single model framework. This is
some accomplishment by the Realized GARCH framework, because this
was considered to be a very difficult empirical problem, see \citet{Bates1996}.
\begin{table}[!tbh]
\caption{In-sample Statistics of Model Performance\label{tab:In-sample-fit}}

\medskip{}

\begin{centering}
\begin{tabular}{cccccc}
\toprule 
\textbf{\small{}Model} & \textbf{\small{}$\mathrm{RG}$} & \textbf{\small{}$\mathrm{EG}$} & \textbf{\small{}$\mathrm{G}$} & \textbf{\small{}$\mathrm{HN}$} & \textbf{\small{}$\mathrm{HN}_{\mathtt{vd}}$}\tabularnewline
\midrule 
\multicolumn{3}{l}{\textbf{\small{}Volatility Risk Premium}} &  &  & \tabularnewline
\midrule 
{\small{}Bias} & {\small{}-0.497} & {\small{}-3.827} & {\small{}-4.397} & {\small{}-4.350} & {\small{}-3.053}\tabularnewline
{\small{}RMSE} & {\small{}3.825} & {\small{}5.078} & {\small{}5.509} & {\small{}5.456} & {\small{}4.499}\tabularnewline
{\small{}$\triangle\%$} &  & {\small{}24.7\%} & {\small{}30.6\%} & {\small{}29.9\%} & {\small{}15.0\%}\tabularnewline
{\small{}MAE} & {\small{}2.635} & {\small{}4.316} & {\small{}4.766} & {\small{}4.720} & {\small{}3.612}\tabularnewline
{\small{}$\triangle\%$} &  & {\small{}39.0\%} & {\small{}44.7\%} & {\small{}44.2\%} & {\small{}27.1\%}\tabularnewline
 &  &  &  &  & \tabularnewline
\multicolumn{3}{l}{\textbf{\small{}Volatility Index (VIX)}} &  &  & \tabularnewline
\midrule 
{\small{}Bias} & {\small{}-0.048} & {\small{}-0.203} & {\small{}-0.318} & {\small{}-0.203} & {\small{}0.164}\tabularnewline
{\small{}RMSE} & {\small{}2.504} & {\small{}2.898} & {\small{}3.012} & {\small{}3.565} & {\small{}3.504}\tabularnewline
{\small{}$\triangle\%$} &  & {\small{}13.6\%} & {\small{}16.9\%} & {\small{}29.8\%} & {\small{}28.5\%}\tabularnewline
{\small{}MAE} & {\small{}1.866} & {\small{}2.222} & {\small{}2.199} & {\small{}2.439} & {\small{}2.419}\tabularnewline
{\small{}$\triangle\%$} &  & {\small{}16.0\%} & {\small{}15.1\%} & {\small{}23.5\%} & {\small{}22.9\%}\tabularnewline
 &  &  &  &  & \tabularnewline
\multicolumn{3}{l}{\textbf{\small{}Annualized Volatility}} &  &  & \tabularnewline
\midrule 
{\small{}Bias} & {\small{}0.449} & {\small{}3.624} & {\small{}4.080} & {\small{}4.146} & {\small{}3.217}\tabularnewline
{\small{}RMSE} & {\small{}2.943} & {\small{}4.380} & {\small{}4.658} & {\small{}5.666} & {\small{}4.920}\tabularnewline
{\small{}$\triangle\%$} &  & {\small{}32.8\%} & {\small{}36.8\%} & {\small{}48.1\%} & {\small{}40.2\%}\tabularnewline
{\small{}MAE} & {\small{}2.002} & {\small{}3.898} & {\small{}4.285} & {\small{}4.839} & {\small{}3.979}\tabularnewline
{\small{}$\triangle\%$} &  & {\small{}48.6\%} & {\small{}53.3\%} & {\small{}58.6\%} & {\small{}49.7\%}\tabularnewline
\bottomrule
\end{tabular}
\par\end{centering}
Note: Bias denotes the difference between the model-implied quantity
and the market-based quantity. The rows indicated with ``$\triangle\%$''
present the increase in RMSE and MAE for all models relative to the
$\mathrm{RG}$. The market-based VRP in this table is defined by $\mathrm{VRP}_{t}^{\mathrm{market}}=\mathrm{VIX}_{t}-\sqrt{\tfrac{252}{22}\sum_{i=1}^{22}\mathrm{RVcc}_{t-i+1}}\times100$.
\end{table}

\begin{center}
 
\par\end{center}

\subsubsection{Out-of-sample pricing performance}

The proposed pricing model, which is based on the Realized GARCH model
and the affine exponential SDF, requires a larger number of parameters
to be estimated than the methods based on the conventional GARCH models.
The larger number of parameters could entail some overfitting of the
model, and this might explain some of the observed empirical improvements.
It is therefore important to document that the model also provides
improvements out-of-sample.\textcolor{black}{{} In this section, we
compare the models in terms of their out-of-sample pricing errors
using a rolling estimation window, based on the past 750 daily observations.
The first forecast is made for the first month (22 trading days) of
2007, and this forecasts is based on parameters that were estimated
with the previous 750 daily observations (January 6th 2004 to December
29th, 2006). We report out-of-sample pricing errors for 2007-2018
}and two sub-sample periods, 2007-2012 and 2013-2018. Splitting the
out-of-sample period into two subsamples is interesting because some
results could potentially be specific to the global financial crisis,
which had high volatility and high volatility-of-volatility. The global
financial crisis is contained in the first subsample.

We report a range of performance statistics for each of the models
and each of the sample periods in Table \ref{tab:OOS_TPT}. The significance
of relative performance is evaluated with Diebold-Mariano (DM) statistics
where we compare each of the alternative models to the Realized GARCH
model. For this purpose, we first compute the tracking errors for
the $i$-th model,

\[
e_{it}=X_{i,t}^{\mathtt{model}}-X_{t}^{\mathtt{market}},
\]
where $X$ represents the volatility risk premium, the volatility
index, or the annualized volatility. These errors are translated into
losses using either the mean square error, $g(e_{it})=e_{it}^{2}$
or the mean absolute error $g(e_{it})=|e_{it}|$. The loss of model
$i$, relative to the Realized GARCH model ($i=0$) is now defined
by

\[
d_{i,t}=g(e_{it})-g(e_{0t}),
\]
and we proceed to tests the hypothesis, $H_{0}:\thinspace\thinspace\mathbb{E}(d_{i,t})=0$
using the Diebold-Mariano (DM) statistic, $\mathrm{DM}_{i}=\sqrt{T}\bar{d}_{i,\cdot}/\hat{\sigma}_{d_{i}}$,
where $\bar{d}_{i,\cdot}=\frac{1}{T}\sum_{t}d_{i,t}$ and $\hat{\sigma}_{d_{i}}^{2}$
is an estimate of the long-run variance of $\{d_{i,t}\}$. Our estimates
of $\sigma_{d_{i}}^{2}$, are based on the Parzen kernel with bandwidth
$H=42$. Under suitable regularity conditions, it can be shown that
$\mathrm{DM}_{i}\overset{d}{\rightarrow}N(0,1)$ under the null hypothesis
$\mathbb{E}(d_{i,t})=0$, see \citet{DieboldMariano95}, and the 10\%
and 5\% critical values are therefore given by 1.64 and 1.96, respectively.

Once again the Realized GARCH provides the best out-of-sample pricing
performance for the VRP, the VIX, and the annualized volatility, and
this is found in all three sample periods. The model also provides
the smallest bias in most cases and the improvement in RMSE/MAE ranges
from 10\% to 40\% in most cases, which is significant in most cases.
This shows that superior in-sample performance of the Realized GARCH
model cannot be attributed to overfitting. The RMSE and MAE are, as
expected, larger in the subsample with the financial crisis. The Realized
GARCH model really stands out in terms of explaining the volatility
under the physical measure (Annualized Volatility), where the reduction
in out-of-sample loss is always larger than 20\% and as larger as
52.2\%. 

The HN is always the worst model for tracking the VIX out-of-sample,
as was the case in-sample. The picture is similar for the annualized
volatility, albeit $\mathrm{HN}_{\mathtt{vd}}$ is the ``best of
the rest'' in terms of tracking of annualized volatility in the post
crisis period. This might be explained by the variance dependent SDF
being less misspecified when volatility is low, while it cannot generate
enough discrepancy between $\mathbb{P}$ and $\mathbb{Q}$ when volatility
is high. Interestingly, even though the $\mathrm{HN}_{\mathtt{vd}}$
is clearly inferior to both the GARCH and EGARCH models in terms of
forecasting volatility under $\mathbb{P}$ and $\mathbb{Q}$, it is
actually better than these two models in terms of forecasting the
VRP. 
\begin{sidewaystable}
\begin{centering}
\centering \caption{Out-of-sample Statistics of Model Performance\label{tab:OOS_TPT}}
\begin{tabular}{cccccccccccccccc}
\toprule 
 & \multicolumn{5}{c}{\textbf{\small{}Volatility Risk Premium}} & \multicolumn{5}{c}{\textbf{\small{}Volatility Index (VIX)}} & \multicolumn{5}{c}{\textbf{\small{}Annualized Volatility}}\tabularnewline
\midrule 
 &  &  &  &  &  &  &  &  &  &  &  &  &  &  & \tabularnewline
\multicolumn{16}{l}{\textbf{\small{}Full: 20070103 - 20181230}}\tabularnewline
\textbf{\small{}Model} & \textbf{\small{}$\mathrm{RG}$} & \textbf{\small{}$\mathrm{EG}$} & \textbf{\small{}$\mathrm{G}$} & \textbf{\small{}$\mathrm{HN}$} & \textbf{\small{}$\mathrm{HN}_{\mathtt{vd}}$} & \textbf{\small{}$\mathrm{RG}$} & \textbf{\small{}$\mathrm{EG}$} & \textbf{\small{}$\mathrm{G}$} & \textbf{\small{}$\mathrm{HN}$} & \textbf{\small{}$\mathrm{HN}_{\mathtt{vd}}$} & \textbf{\small{}$\mathrm{RG}$} & \textbf{\small{}$\mathrm{EG}$} & \textbf{\small{}$\mathrm{G}$} & \textbf{\small{}$\mathrm{HN}$} & \textbf{\small{}$\mathrm{HN}_{\mathtt{vd}}$}\tabularnewline
\textbf{\small{}Bias} & {\small{}-0.688} & {\small{}-3.824} & {\small{}-3.762} & {\small{}-3.814} & {\small{}-2.753} & {\small{}0.155} & {\small{}-0.212} & {\small{}0.063} & {\small{}-0.533} & {\small{}0.132} & {\small{}0.844} & {\small{}3.612} & {\small{}3.825} & {\small{}3.282} & {\small{}2.885}\tabularnewline
\textbf{\small{}RMSE} & {\small{}3.975} & {\small{}5.420} & {\small{}5.378} & {\small{}5.387} & {\small{}4.668} & {\small{}2.870} & {\small{}3.259} & {\small{}3.278} & {\small{}4.298} & {\small{}3.903} & {\small{}3.271} & {\small{}5.154} & {\small{}4.961} & {\small{}5.856} & {\small{}5.360}\tabularnewline
\textbf{\small{}DM stat.} &  & {\small{}8.266} & {\small{}8.064} & {\small{}8.386} & {\small{}4.853} &  & {\small{}3.976} & {\small{}3.579} & {\small{}2.659} & {\small{}1.887} &  & {\small{}6.609} & {\small{}5.400} & {\small{}5.314} & {\small{}4.066}\tabularnewline
\textbf{\small{}$\triangle\%$} &  & {\small{}26.7\%} & {\small{}26.1\%} & {\small{}26.2\%} & {\small{}14.8\%} &  & {\small{}11.9\%} & {\small{}12.4\%} & {\small{}33.2\%} & {\small{}26.5\%} &  & {\small{}36.5\%} & {\small{}34.1\%} & {\small{}44.2\%} & {\small{}39.0\%}\tabularnewline
\textbf{\small{}MAE} & {\small{}2.921} & {\small{}4.565} & {\small{}4.512} & {\small{}4.543} & {\small{}3.676} & {\small{}2.113} & {\small{}2.402} & {\small{}2.387} & {\small{}2.842} & {\small{}2.615} & {\small{}2.466} & {\small{}4.420} & {\small{}4.422} & {\small{}4.750} & {\small{}4.203}\tabularnewline
\textbf{\small{}DM stat.} &  & {\small{}10.179} & {\small{}10.170} & {\small{}10.621} & {\small{}6.250} &  & {\small{}3.995} & {\small{}3.333} & {\small{}4.222} & {\small{}3.137} &  & {\small{}9.034} & {\small{}8.894} & {\small{}11.298} & {\small{}8.624}\tabularnewline
\textbf{\small{}$\triangle\%$} &  & {\small{}36.0\%} & {\small{}35.3\%} & {\small{}35.7\%} & {\small{}20.5\%} &  & {\small{}12.0\%} & {\small{}11.5\%} & {\small{}25.6\%} & {\small{}19.2\%} &  & {\small{}44.2\%} & {\small{}44.2\%} & {\small{}48.1\%} & {\small{}41.3\%}\tabularnewline
 &  &  &  &  &  &  &  &  &  &  &  &  &  &  & \tabularnewline
\multicolumn{16}{l}{\textbf{\small{}Crisis Period: 20070103 - 20121230}}\tabularnewline
\textbf{\small{}Model} & \textbf{\small{}$\mathrm{RG}$} & \textbf{\small{}$\mathrm{EG}$} & \textbf{\small{}$\mathrm{G}$} & \textbf{\small{}$\mathrm{HN}$} & \textbf{\small{}$\mathrm{HN}_{\mathtt{vd}}$} & \textbf{\small{}$\mathrm{RG}$} & \textbf{\small{}$\mathrm{EG}$} & \textbf{\small{}$\mathrm{G}$} & \textbf{\small{}$\mathrm{HN}$} & \textbf{\small{}$\mathrm{HN}_{\mathtt{vd}}$} & \textbf{\small{}$\mathrm{RG}$} & \textbf{\small{}$\mathrm{EG}$} & \textbf{\small{}$\mathrm{G}$} & \textbf{\small{}$\mathrm{HN}$} & \textbf{\small{}$\mathrm{HN}_{\mathtt{vd}}$}\tabularnewline
\textbf{\small{}Bias} & {\small{}-1.011} & {\small{}-4.311} & {\small{}-4.439} & {\small{}-4.315} & {\small{}-3.376} & {\small{}-0.294} & {\small{}-0.824} & {\small{}-0.484} & {\small{}-1.427} & {\small{}-0.321} & {\small{}0.717} & {\small{}3.487} & {\small{}3.956} & {\small{}2.888} & {\small{}3.055}\tabularnewline
\textbf{\small{}RMSE} & {\small{}4.570} & {\small{}6.060} & {\small{}6.187} & {\small{}6.014} & {\small{}5.339} & {\small{}3.383} & {\small{}3.972} & {\small{}3.944} & {\small{}5.497} & {\small{}4.866} & {\small{}3.493} & {\small{}5.520} & {\small{}5.158} & {\small{}6.844} & {\small{}6.471}\tabularnewline
\textbf{\small{}DM stat.} &  & {\small{}6.012} & {\small{}6.489} & {\small{}5.940} & {\small{}3.756} &  & {\small{}4.028} & {\small{}3.277} & {\small{}2.515} & {\small{}1.676} &  & {\small{}4.307} & {\small{}3.018} & {\small{}4.141} & {\small{}3.540}\tabularnewline
\textbf{\small{}$\triangle\%$} &  & {\small{}24.6\%} & {\small{}26.1\%} & {\small{}24.0\%} & {\small{}14.4\%} &  & {\small{}14.8\%} & {\small{}14.2\%} & {\small{}38.5\%} & {\small{}30.5\%} &  & {\small{}36.7\%} & {\small{}32.3\%} & {\small{}49.0\%} & {\small{}46.0\%}\tabularnewline
\textbf{\small{}MAE} & {\small{}3.420} & {\small{}5.164} & {\small{}5.261} & {\small{}5.142} & {\small{}4.301} & {\small{}2.557} & {\small{}3.069} & {\small{}2.946} & {\small{}3.798} & {\small{}3.327} & {\small{}2.514} & {\small{}4.505} & {\small{}4.476} & {\small{}5.261} & {\small{}4.992}\tabularnewline
\textbf{\small{}DM stat.} &  & {\small{}7.760} & {\small{}8.200} & {\small{}7.997} & {\small{}5.092} &  & {\small{}4.254} & {\small{}2.820} & {\small{}3.906} & {\small{}2.541} &  & {\small{}5.559} & {\small{}5.287} & {\small{}8.462} & {\small{}7.740}\tabularnewline
\textbf{\small{}$\triangle\%$} &  & {\small{}33.8\%} & {\small{}35.0\%} & {\small{}33.5\%} & {\small{}20.5\%} &  & {\small{}16.7\%} & {\small{}13.2\%} & {\small{}32.7\%} & {\small{}23.1\%} &  & {\small{}44.2\%} & {\small{}43.8\%} & {\small{}52.2\%} & {\small{}49.6\%}\tabularnewline
 &  &  &  &  &  &  &  &  &  &  &  &  &  &  & \tabularnewline
\multicolumn{16}{l}{\textbf{\small{}Post-crisis Period: 20130103-20181230}}\tabularnewline
\textbf{\small{}Model} & \textbf{\small{}$\mathrm{RG}$} & \textbf{\small{}$\mathrm{EG}$} & \textbf{\small{}$\mathrm{G}$} & \textbf{\small{}$\mathrm{HN}$} & \textbf{\small{}$\mathrm{HN}_{\mathtt{vd}}$} & \textbf{\small{}$\mathrm{RG}$} & \textbf{\small{}$\mathrm{EG}$} & \textbf{\small{}$\mathrm{G}$} & \textbf{\small{}$\mathrm{HN}$} & \textbf{\small{}$\mathrm{HN}_{\mathtt{vd}}$} & \textbf{\small{}$\mathrm{RG}$} & \textbf{\small{}$\mathrm{EG}$} & \textbf{\small{}$\mathrm{G}$} & \textbf{\small{}$\mathrm{HN}$} & \textbf{\small{}$\mathrm{HN}_{\mathtt{vd}}$}\tabularnewline
\textbf{\small{}Bias} & {\small{}-0.365} & {\small{}-3.336} & {\small{}-3.084} & {\small{}-3.313} & {\small{}-2.129} & {\small{}0.605} & {\small{}0.401} & {\small{}0.611} & {\small{}0.362} & {\small{}0.585} & {\small{}0.970} & {\small{}3.737} & {\small{}3.695} & {\small{}3.675} & {\small{}2.714}\tabularnewline
\textbf{\small{}RMSE} & {\small{}3.271} & {\small{}4.694} & {\small{}4.423} & {\small{}4.676} & {\small{}3.881} & {\small{}2.242} & {\small{}2.339} & {\small{}2.435} & {\small{}2.591} & {\small{}2.603} & {\small{}3.032} & {\small{}4.758} & {\small{}4.755} & {\small{}4.663} & {\small{}3.946}\tabularnewline
\textbf{\small{}DM stat.} &  & {\small{}5.903} & {\small{}5.400} & {\small{}6.254} & {\small{}3.188} &  & {\small{}1.256} & {\small{}1.704} & {\small{}2.879} & {\small{}2.846} &  & {\small{}6.162} & {\small{}6.981} & {\small{}6.516} & {\small{}4.014}\tabularnewline
\textbf{\small{}$\triangle\%$} &  & {\small{}30.3\%} & {\small{}26.0\%} & {\small{}30.0\%} & {\small{}15.7\%} &  & {\small{}4.2\%} & {\small{}7.9\%} & {\small{}13.5\%} & {\small{}13.9\%} &  & {\small{}36.3\%} & {\small{}36.2\%} & {\small{}35.0\%} & {\small{}23.2\%}\tabularnewline
\textbf{\small{}MAE} & {\small{}2.421} & {\small{}3.966} & {\small{}3.763} & {\small{}3.943} & {\small{}3.050} & {\small{}1.670} & {\small{}1.734} & {\small{}1.828} & {\small{}1.884} & {\small{}1.902} & {\small{}2.418} & {\small{}4.335} & {\small{}4.367} & {\small{}4.238} & {\small{}3.414}\tabularnewline
\textbf{\small{}DM stat.} &  & {\small{}6.644} & {\small{}6.276} & {\small{}7.005} & {\small{}3.767} &  & {\small{}0.984} & {\small{}1.821} & {\small{}2.679} & {\small{}2.749} &  & {\small{}7.965} & {\small{}8.344} & {\small{}8.247} & {\small{}5.231}\tabularnewline
\textbf{\small{}$\triangle\%$} &  & {\small{}39.0\%} & {\small{}35.7\%} & {\small{}38.6\%} & {\small{}20.6\%} &  & {\small{}3.7\%} & {\small{}8.7\%} & {\small{}11.4\%} & {\small{}12.2\%} &  & {\small{}44.2\%} & {\small{}44.6\%} & {\small{}42.9\%} & {\small{}29.2\%}\tabularnewline
\bottomrule
\end{tabular}
\par\end{centering}
Note: Let the ``model error'' be the difference between the model-implied
quantity and the market-based quantity. It sample average is denoted
Bias and the rows indicated with ``$\triangle\%$'' state who much
larger the RMSE or MAE is for each alternative models, measured relative
to $\mathrm{RG}$. Diebold and Mariano statistics (DM stat. ) are
computed for the relative MSE or MAE losses, where the standard errors
are calculated with the Parzen kernel with $H=42$ as bandwidth. The
market VRP in this table is defined as $\mathrm{VRP}_{t}^{\mathrm{market}}=\mathrm{VIX}_{t}-\sqrt{\tfrac{252}{22}\sum_{i=1}^{22}\mathrm{RVcc}_{t-i+1}}\times100$.
\end{sidewaystable}

\section{Conclusion}

We have developed a Realized GARCH model for the simultaneous modeling
of returns, the VIX, and the VRP, using an exponentially affine SDF
that takes advantage of the dual shock structure in the Realized GARCH
model. This framework has several attractive features. First, its
dual-shock structure lead to a distinct compensation for volatility
risk, which is empirically important. Second, it takes advantage of
the information contained in realized measures of volatility. Third,
it has a flexible leverage function that captures the empirically
important return-volatility dependence in a parsimonious manner. Fourth,
the model combined with the exponentially affine SDF, conveniently,
yields analytical formulae for the VIX and the volatility risk premium.
Fifth, the model is an observation-driven model, which makes estimation
straight forward. Sixth, its dynamic properties under the physical
and risk-neutral measures offer intuitive and theory-consistent explanations
for the excellent empirical performance offered by the Realized GARCH
model. 

\bibliographystyle{ecta}
\addcontentsline{toc}{section}{\refname}\bibliography{reference,prh}

\newpage{}

\appendix
\setcounter{lem}{0}\renewcommand{\thelem}{A.\arabic{lem}}

\section{Appendix of Proofs\label{sec:Appendix-of-Proofs}}
\begin{lem}
\label{lem:LemConstantLambda}Suppose
\end{lem}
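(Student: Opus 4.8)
The plan is to impose the single no-arbitrage restriction that the risky asset earns the risk-free rate under $\mathbb{Q}$ --- equivalently, that its discounted gross return is a $\mathbb{Q}$-martingale --- and to show that this restriction pins down the coefficient on $z_{t+1}$ in the SDF uniquely. Writing the free, possibly time-varying coefficient in (\ref{eq:SDF}) as $\lambda_{t+1}$, and assuming it is $\mathcal{F}_{t}$-measurable so that the conditional computations go through, the no-arbitrage condition I would use is
\[
\mathbb{E}_{t}^{\mathbb{Q}}\left[e^{r_{t+1}}\right]=e^{r},\qquad\text{equivalently}\qquad\mathbb{E}_{t}^{\mathbb{P}}\left[M_{t+1}e^{r_{t+1}}\right]=e^{r}.
\]
The first step is to substitute the return equation (\ref{eq:return.eq}) for $r_{t+1}$ and pull the $\mathcal{F}_{t}$-measurable factor $e^{r+\lambda\sqrt{h_{t+1}}-\frac{1}{2}h_{t+1}}$ outside the conditional expectation, since $h_{t+1}$ is known at time $t$. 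This reduces the martingale condition to evaluating the single conditional Laplace transform $\mathbb{E}_{t}^{\mathbb{Q}}[e^{\sqrt{h_{t+1}}z_{t+1}}]$.

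The second step is to compute that transform from the $\mathbb{Q}$-dynamics. Repeating the moment generating function calculation that precedes the statement, but with $\lambda$ replaced by the free $\lambda_{t+1}$, shows that under $\mathbb{Q}$ the shock $z_{t+1}$ is Gaussian with mean $-\lambda_{t+1}$ and unit variance, so that $\mathbb{E}_{t}^{\mathbb{Q}}[e^{\sqrt{h_{t+1}}z_{t+1}}]=\exp(-\lambda_{t+1}\sqrt{h_{t+1}}+\tfrac{1}{2}h_{t+1})$. Inserting this into the factored condition, the $e^{r}$ on the two sides cancels and the two $\pm\tfrac{1}{2}h_{t+1}$ terms cancel, leaving
\[
(\lambda-\lambda_{t+1})\sqrt{h_{t+1}}=0.
\]
Because $h_{t+1}>0$ almost surely, this forces $\lambda_{t+1}=\lambda$ for every $t$, so the coefficient is necessarily constant and equal to the price of equity risk in (\ref{eq:return.eq}).

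There is no genuinely hard step here; the substance is a one-line identification once the pieces are assembled. The points that require care, and which I would state explicitly, are: (i) the normalization and positivity of $M_{t+1}$ in (\ref{eq:SDF}), so that it defines a legitimate change of measure with $\mathbb{E}_{t}^{\mathbb{P}}[M_{t+1}]=1$; (ii) the $\mathcal{F}_{t}$-measurability of $\lambda_{t+1}$, which is exactly what lets the conditional MGF argument be applied verbatim with $\lambda_{t+1}$ treated as a constant inside $\mathbb{E}_{t}$; and (iii) the fact that $h_{t+1}$ is $\mathcal{F}_{t}$-measurable, which justifies the factorization in the first step. The only modeling choice worth flagging is the form of the no-arbitrage condition itself: applying the pricing restriction to the risky asset (rather than to the bond, which merely fixes the normalization) is what isolates and identifies $\lambda_{t+1}$.
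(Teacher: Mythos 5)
Your proposal is correct and follows essentially the same route as the paper's own proof: both impose the no-arbitrage condition $\mathbb{E}_{t}^{\mathbb{Q}}[e^{r_{t+1}}]=e^{r}$ on the risky asset, evaluate the conditional Gaussian expectation (you via the $\mathbb{Q}$-law of $z_{t+1}$, the paper directly under $\mathbb{P}$ with $M_{t+1}$), and conclude from $(\lambda-\lambda_{t})\sqrt{h_{t+1}}=0$ with $h_{t+1}>0$. The measurability caveats you flag are implicit in the paper's argument but your version is the same proof.
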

\[
M_{t+1}=\frac{\exp(-\lambda_{t}z_{t+1}-\xi_{t}u_{t+1})}{\mathrm{\mathbb{E}}\exp(-\lambda_{t}z_{t+1}-\xi_{t}u_{t+1})}=\exp\left\{ -\lambda_{t}z_{t+1}-\xi_{t}u_{t+1}-\tfrac{1}{2}(\lambda_{t}^{2}+\xi_{t}^{2})\right\} ,
\]
then by non-arbitrage we have $\lambda_{t}=\lambda$.
\begin{proof}
The non-arbitrage condition is $\mathbb{E}_{t}^{\mathbb{Q}}(\exp(r_{t+1}))=\exp(r)$
and the result follows by 
\begin{eqnarray*}
\mathbb{E}_{t}^{\mathbb{Q}}(\exp(r_{t+1})) & = & \mathbb{E}_{t}(M_{t+1}\exp(r_{t+1}))\\
 & = & \mathbb{E}_{t}\exp\left\{ (\sqrt{h_{t+1}}-\lambda_{t})z_{t+1}-\xi_{t}u_{t+1}-\tfrac{1}{2}(\lambda_{t}^{2}+\xi_{t}^{2})+r+\lambda\sqrt{h_{t+1}}-\tfrac{1}{2}h_{t+1}\right\} \\
 & = & \mathbb{E}_{t}\exp\left\{ r+(\lambda-\lambda_{t})\sqrt{h_{t+1}}\right\} .
\end{eqnarray*}

In order to have the above equation hold for all $h_{t}$, we need
$\lambda_{t}=\lambda$.
\end{proof}
\begin{lem}
\label{LemRGunderQ}The Realized GARCH model defined by (\ref{eq:return.eq})-(\ref{eq:measurement.eq})
and the affine exponential SDF, implied the model (\ref{eq:returnQ})-(\ref{eq:measureQ})
under the risk neutral measure, $\mathbb{Q}$.
\end{lem}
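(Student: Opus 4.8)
The plan is to prove the lemma in two stages. First I would pin down the conditional distribution of the innovation pair $(z_{t+1},u_{t+1})$ under $\mathbb{Q}$ that is induced by the SDF (\ref{eq:SDF}); second, I would substitute the resulting risk-neutral innovations into the three physical-measure equations (\ref{eq:return.eq})--(\ref{eq:measurement.eq}) and collect coefficients to read off the parameter map.

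For the first stage, I would use the defining property $\mathbb{E}_t^{\mathbb{Q}}[X_{t+1}]=\mathbb{E}_t^{\mathbb{P}}[M_{t+1}X_{t+1}]$ with the test payoff $X_{t+1}=\exp(s_1 z_{t+1}+s_2 u_{t+1})$. Since $(z_{t+1},u_{t+1})$ is iid $N(0,I)$ under $\mathbb{P}$ and independent of $\mathcal{F}_t$, a completion-of-the-square computation gives the $\mathbb{Q}$-conditional MGF $\Psi(s_1,s_2)=\exp[-s_1\lambda-s_2\xi+\tfrac{1}{2}(s_1^2+s_2^2)]$, which is exactly the expression already displayed in the text. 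This is the MGF of a bivariate Gaussian with mean $(-\lambda,-\xi)$ and identity covariance, so under $\mathbb{Q}$ the shifted variables $z_{t+1}^\ast=z_{t+1}+\lambda$ and $u_{t+1}^\ast=u_{t+1}+\xi$ are $N(0,I)$.

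For the second stage, I would substitute $z_t=z_t^\ast-\lambda$ and $u_t=u_t^\ast-\xi$ into each equation. In (\ref{eq:return.eq}) the term $\lambda\sqrt{h_{t+1}}$ cancels the $-\lambda\sqrt{h_{t+1}}$ generated by $\sqrt{h_{t+1}}(z_{t+1}^\ast-\lambda)$, giving (\ref{eq:returnQ}). In (\ref{eq:garch.eq}), expanding $\tau_2((z_t^\ast-\lambda)^2-1)$ yields the cross term $-2\tau_2\lambda z_t^\ast$ and the constant $\tau_2\lambda^2$; combined with $-\tau_1\lambda$ from the linear term and $-\gamma\sigma\xi$ from the volatility shock, the constants collect to $\tilde\omega$ and the coefficient on $z_t^\ast$ becomes $\tilde\tau_1=\tau_1-2\tau_2\lambda$, which is (\ref{eq:GARCHq}). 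The measurement equation (\ref{eq:measurement.eq}) is treated identically and produces $\tilde\kappa=\kappa-\delta_1\lambda+\delta_2\lambda^2-\sigma\xi$ and $\tilde\delta_1=\delta_1-2\delta_2\lambda$, delivering (\ref{eq:measureQ}). This portion is routine algebra.

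The one step that requires genuine care, and which I regard as the main obstacle, is justifying that the lemma is a statement about the whole dynamic system rather than a single-period change of measure: I must verify that $\{(z_t^\ast,u_t^\ast)\}$ is an iid $N(0,I)$ sequence under $\mathbb{Q}$, not merely a sequence with the correct one-step marginal. The key observation is that the cumulative Radon--Nikodym density is the product $\prod_s M_s$ of one-period factors, each $M_{s+1}$ depending only on $(z_{s+1},u_{s+1})$ with $\mathbb{E}_s^{\mathbb{P}}[M_{s+1}]=1$; hence the conditional law obtained above does not depend on $\mathcal{F}_t$, which is precisely what upgrades the one-step result to iid-ness under $\mathbb{Q}$. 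I would also remark that the cancellation in the return equation is exact only because the SDF parameter multiplying $z_{t+1}$ coincides with the equity-risk parameter $\lambda$ in (\ref{eq:return.eq}); this is not assumed but enforced by the no-arbitrage condition of Lemma \ref{lem:LemConstantLambda}.
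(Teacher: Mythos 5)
Your proposal is correct and follows essentially the same route as the paper: the MGF calculation identifying the shift $(z^{\ast},u^{\ast})=(z+\lambda,u+\xi)$ is exactly the argument given in Section 2.2.1, and the appendix proof of the lemma is precisely your second stage of substituting $(z_{t}^{\ast}-\lambda,u_{t}^{\ast}-\xi)$ and collecting coefficients. Your additional observation that the product structure of the one-period SDFs (with constant $\lambda,\xi$) is what makes $\{(z_{t}^{\ast},u_{t}^{\ast})\}$ iid $N(0,I)$ under $\mathbb{Q}$ is a point the paper leaves implicit, and is a worthwhile clarification rather than a departure.
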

\begin{proof}
Substituting $(z_{t}^{\ast}-\lambda,u_{t}^{\ast}-\xi)$ for $(z_{t},u_{t})$
immediately yields{\small{}
\begin{eqnarray*}
\log h_{t+1} & = & \omega+\beta\log h_{t}+\tau_{1}(z_{t}^{\ast}-\lambda)+\tau_{2}[(z_{t}^{\ast}-\lambda)^{2}-1]+\gamma\sigma(u_{t}^{\ast}-\xi)\\
 & = & (\omega-\tau_{1}\lambda+\tau_{2}\lambda^{2}-\gamma\sigma\xi)+\beta\log h_{t}+(\tau_{1}-2\tau_{2}\lambda)z_{t}^{\ast}+\tau_{2}(z_{t}^{\ast2}-1)+\gamma\sigma u_{t}^{\ast},\\
\log x_{t} & = & \kappa+\phi\log h_{t}+\delta_{1}(z_{t}^{\ast}-\lambda)+\delta_{2}[(z_{t}^{\ast}-\lambda)^{2}-1]+\sigma(u_{t}^{\ast}-\xi)\\
 & = & (\kappa-\delta_{1}\lambda+\delta_{2}\lambda^{2}-\sigma\xi)+\phi\log h_{t}+(\delta_{1}-2\delta_{2}\lambda)z_{t}^{\ast}+\delta_{2}(z_{t}^{\ast2}-1)+\sigma u_{t}^{\ast}.
\end{eqnarray*}
}{\small\par}
\end{proof}
\begin{lem}
\label{lem:Eexp(ax+bxx)}Suppose that $X\sim N(0,1)$ then for $b<\tfrac{1}{2}$
$\mathbb{E}\exp\{aX+bX^{2}\}=\tfrac{1}{\sqrt{1-2b}}\exp\{\tfrac{a^{2}/2}{1-2b}\}$ 
\end{lem}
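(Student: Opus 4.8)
The plan is to evaluate the expectation directly as a Gaussian integral and reduce it to a standard normalizing constant by completing the square. Writing out the definition, the quantity of interest is
\[
\mathbb{E}\exp\{aX+bX^{2}\}=\int_{-\infty}^{\infty}\frac{1}{\sqrt{2\pi}}\exp\left\{-\tfrac{1}{2}x^{2}+ax+bx^{2}\right\}dx,
\]
so the first step is simply to collect the terms in the exponent. The coefficient on $x^{2}$ becomes $-\tfrac{1}{2}(1-2b)$, and here the hypothesis $b<\tfrac{1}{2}$ is exactly what is needed: it guarantees $1-2b>0$, so that the exponent is a downward parabola in $x$ and the integral converges. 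This convergence condition is the only subtle point; everything else is mechanical.

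Next I would complete the square. Setting $c:=1-2b>0$, the exponent $-\tfrac{c}{2}x^{2}+ax$ can be rewritten as $-\tfrac{c}{2}\bigl(x-\tfrac{a}{c}\bigr)^{2}+\tfrac{a^{2}}{2c}$, where the constant term $\tfrac{a^{2}}{2c}$ is pulled out of the integral. What remains under the integral sign is a Gaussian kernel centered at $a/c$ with precision $c$, and I would invoke the elementary fact that $\int_{-\infty}^{\infty}\exp\{-\tfrac{c}{2}(x-m)^{2}\}\,dx=\sqrt{2\pi/c}$ for any mean $m$ and any $c>0$.

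Combining the pieces, the prefactor $\tfrac{1}{\sqrt{2\pi}}$ cancels against $\sqrt{2\pi/c}$ to leave $1/\sqrt{c}=1/\sqrt{1-2b}$, while the extracted constant contributes $\exp\{a^{2}/(2c)\}=\exp\{\tfrac{a^{2}/2}{1-2b}\}$, which is precisely the claimed expression. There is no genuine obstacle here: the result is a routine moment-generating-function computation for the squared-and-linear Gaussian exponent, and the entire content is the completion of the square together with the domain condition $b<\tfrac{1}{2}$ that makes the integral finite. The only thing worth stating explicitly in the write-up is that condition, since it is what licenses the final Gaussian integral evaluation.
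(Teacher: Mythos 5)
Your proof is correct and is essentially the computation the paper performs: the paper substitutes $u=\sqrt{1-2b}\,x$ and invokes the standard normal moment generating function, which is just your completion of the square packaged as a change of variables. Both arguments hinge on the same single point you identify, namely that $b<\tfrac{1}{2}$ makes $1-2b>0$ so the Gaussian integral converges.
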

\begin{proof}
We have 
\[
e^{ax+bx^{2}}e^{-\tfrac{x^{2}}{2}}=e^{ax-\tfrac{x^{2}}{2}(1-2b)}=e^{\tfrac{a}{\sqrt{1-2b}}u-\tfrac{u^{2}}{2}},
\]
where $u=\sqrt{1-2b}x$. So integration by substitution yields 
\begin{eqnarray*}
\mathbb{E}e^{aX+bX^{2}} & = & \int\tfrac{1}{\sqrt{2\pi}}e^{ax+bx^{2}}e^{-\tfrac{x^{2}}{2}}\mathrm{d}x=\int\tfrac{1}{\sqrt{2\pi}}e^{\tfrac{a}{\sqrt{1-2b}}u-\tfrac{u^{2}}{2}}\tfrac{1}{\sqrt{1-2b}}\mathrm{d}u\\
 & = & \tfrac{1}{\sqrt{1-2b}}\mathrm{E}e^{\tfrac{a}{\sqrt{1-2b}}U}=\tfrac{1}{\sqrt{1-2b}}e^{\tfrac{1}{2}\tfrac{a^{2}}{1-2b}}.
\end{eqnarray*}
\end{proof}
\noindent\textbf{Proof of Theorem \ref{theo:VIXpricingRG}.} From
the risk neutral dynamics, we have $\log h_{t+1}=\tilde{\omega}+\beta\log h_{t}+v_{t}^{\ast}$
where 
\begin{eqnarray*}
v_{t}^{\ast} & = & \tilde{\tau}_{1}z_{t}^{*}+\tau_{2}(z_{t}^{\ast2}-1)+\gamma\sigma u_{t}^{\ast},
\end{eqnarray*}
so that $\log h_{t+k}=\beta^{k-1}\log h_{t+1}+\sum_{i=0}^{k-2}\beta^{i}(v_{t+k-1-i}^{\ast}+\tilde{\omega})$.
It follows that 
\[
\mathbb{E}_{t}^{Q}[h_{t+k}]=\mathbb{E}_{t}^{Q}[\exp\{\beta^{k-1}\log h_{t+1}+\sum_{i=0}^{k-2}\beta^{i}(v_{t+k-1-i}^{\ast}+\tilde{\omega})\}]=h_{t+1}^{\beta^{k-1}}\prod_{i=0}^{k-2}F_{i},
\]
where $F_{i}=\mathbb{E}_{t}^{\mathbb{Q}}\left[e^{\beta^{i}(v_{t+k-1-i}^{\ast}+\tilde{\omega})}\right]$.
Using the expression for $v_{t}^{\ast}$, and applying Lemma \ref{lem:Eexp(ax+bxx)},
we have,{\footnotesize{}
\begin{eqnarray*}
F_{i} & = & \mathbb{E}_{t}^{\mathbb{Q}}\left[\exp\left\{ \beta^{i}(\tilde{\omega}-\tau_{2})+\beta^{i}\tilde{\tau}_{1}z+\beta^{i}\tau_{2}z^{2}+\beta^{i}\gamma\sigma u\right\} \right]\\
 & = & \exp\left\{ \beta^{i}(\tilde{\omega}-\tau_{2})\right\} \mathbb{E}_{t}^{\mathbb{Q}}\left[\exp\left\{ \beta^{i}\tilde{\tau}_{1}z+\beta^{i}\tau_{2}z^{2}\right\} \right]\mathbb{E}_{t}^{\mathbb{Q}}\left[\exp\left\{ \beta^{i}\gamma\sigma u\right\} \right]\\
 & = & \exp\left\{ \beta^{i}(\tilde{\omega}-\tau_{2})\right\} (1-2\beta^{i}\tau_{2})^{-1/2}\exp\{\tfrac{1}{2}\tfrac{\beta^{2i}\tilde{\tau}_{1}^{2}}{1-2\beta^{i}\tau_{2}}\}\exp\left\{ \tfrac{1}{2}\beta^{2i}\gamma^{2}\sigma^{2}\right\} \\
 & = & (1-2\beta^{i}\tau_{2})^{-1/2}\exp\left\{ \beta^{i}(\tilde{\omega}-\tau_{2})+\tfrac{1}{2}\beta^{2i}[\tfrac{\tilde{\tau}_{1}^{2}}{1-2\beta^{i}\tau_{2}}+\gamma^{2}\sigma^{2}]\right\} ,
\end{eqnarray*}
}where we suppressed the superscripts and subscripts on $z$ and $u$
to simplify the exposition. $\square$

\setcounter{table}{0}\renewcommand{\thetable}{B.\arabic{table}}

\section{Additional Empirical Results: Robustness Checks}

This appendix provides additional empirical results. These results
primarily serve to demonstrate that the main empirical results are
robust to a variety of alternative approaches to the empirical analysis,
such as alternative measures of the market VRP, alternative objective
functions, and different choice for the realized measures.

\subsection{Robustness Check: HAR type VRP results\label{Sec:RobustalterVRP}}

The volatility risk premium used in the the main body of the paper
used the realized volatility over the past months as the predictor
of volatility for the following month, which implicitly use a martingale
assumption for volatility. Here we consider an alternative regression-based
approach, which is known as the HAR model, see \citet{Corsi2009}.\footnote{For other ways other ways to construct the expected volatility, see
\citet{Bekaert2014}} With this approach the monthly volatility is regressed on lagged
daily, weekly, and monthly volatility,

\[
\mathrm{RVcc}_{t+1:t+22}=\beta_{0}+\beta_{d}\mathrm{RVcc}_{t}+\beta_{w}\mathrm{RVcc}_{t-4:t}+\beta_{m}\mathrm{RVcc}_{t-21:t}+\epsilon_{t+1:t+22},
\]
where $\mathrm{RVcc}_{a:b}=\frac{1}{b-a+1}\sum_{t=a}^{b}\mathrm{RVcc}_{t}$.
We also estimate this model using a rolling window with 750 observations
(and lagged an additional 22 days to avoid look-ahead bias in our
estimates).\footnote{For time $t$ the is estimated with data located in $[t-750,t-22]$,
otherwise the dependent variable will contain future realizations
of $\mathrm{RVcc}$.} Table \ref{tab:In-sample-fit-HARtype} and table \ref{tab:OOS_TPT-HARtype}
provide in-sample and out-of-sample results, and all main results
are identical to those obtained with the martingale definition of
the VRP. The Realized GARCH model continues to outperform all other
models.
\begin{table}[!tbh]
\caption{In-Sample Statistics of Model Performance with Alternative VRP Specification\label{tab:In-sample-fit-HARtype}}

\medskip{}

\begin{centering}
\begin{tabular}{crcccc}
\toprule 
\textbf{Model} & \textbf{$\mathrm{RG}$} & \textbf{$\mathrm{EG}$} & \textbf{$\mathrm{G}$} & \textbf{$\mathrm{HN}$} & \textbf{$\mathrm{HN}_{\mathtt{vd}}$}\tabularnewline
\midrule 
\multicolumn{3}{l}{\textbf{Volatility Risk Premium}} &  &  & \tabularnewline
\midrule 
Bias & \textcolor{black}{0.774} & \textcolor{black}{-2.556} & \textcolor{black}{-3.126} & \textcolor{black}{-3.078} & \textcolor{black}{-1.782}\tabularnewline
RMSE & \textcolor{black}{3.682} & \textcolor{black}{4.404} & \textcolor{black}{4.812} & \textcolor{black}{4.746} & \textcolor{black}{4.133}\tabularnewline
$\triangle\%$ &  & \textcolor{black}{16.4\%} & \textcolor{black}{23.5\%} & \textcolor{black}{22.4\%} & \textcolor{black}{10.9\%}\tabularnewline
MAE & \textcolor{black}{2.184} & \textcolor{black}{3.070} & \textcolor{black}{3.514} & \textcolor{black}{3.458} & \textcolor{black}{2.750}\tabularnewline
$\triangle\%$ & \multicolumn{1}{c}{} & \textcolor{black}{28.9\%} & \textcolor{black}{37.8\%} & \textcolor{black}{36.8\%} & \textcolor{black}{20.6\%}\tabularnewline
 & \multicolumn{1}{c}{} &  &  &  & \tabularnewline
\multicolumn{3}{l}{\textbf{Volatility Index (VIX)}} &  &  & \tabularnewline
\midrule 
Bias & -0.048 & -0.203 & -0.318 & -0.203 & 0.164\tabularnewline
RMSE & 2.504 & 2.898 & 3.012 & 3.565 & 3.504\tabularnewline
$\triangle\%$ &  & 13.6\% & 16.9\% & 29.8\% & 28.5\%\tabularnewline
MAE & 1.866 & 2.222 & 2.199 & 2.439 & 2.419\tabularnewline
$\triangle\%$ &  & 16.0\% & 15.1\% & 23.5\% & 22.9\%\tabularnewline
 & \multicolumn{1}{c}{} &  &  &  & \tabularnewline
\multicolumn{3}{l}{\textbf{Annualized Volatility}} &  &  & \tabularnewline
\midrule 
Bias & \textcolor{black}{-0.823} & \textcolor{black}{2.353} & \textcolor{black}{2.808} & \textcolor{black}{2.875} & \textcolor{black}{1.946}\tabularnewline
RMSE & \textcolor{black}{3.783} & \textcolor{black}{4.753} & \textcolor{black}{4.547} & \textcolor{black}{5.893} & \textcolor{black}{5.516}\tabularnewline
$\triangle\%$ &  & \textcolor{black}{20.4\%} & \textcolor{black}{16.8\%} & \textcolor{black}{35.8\%} & \textcolor{black}{31.4\%}\tabularnewline
MAE & \textcolor{black}{2.393} & \textcolor{black}{3.653} & \textcolor{black}{3.596} & \textcolor{black}{4.285} & \textcolor{black}{3.881}\tabularnewline
$\triangle\%$ &  & \textcolor{black}{34.5\%} & \textcolor{black}{33.4\%} & \textcolor{black}{44.1\%} & \textcolor{black}{38.3\%}\tabularnewline
\bottomrule
\end{tabular}
\par\end{centering}
Note: Bias is defined as model-based quantity minus the corresponding
market-based quantity; The rows starting with $\triangle\%$ present
the increase in percentage of RMSE and MAE for the competing models
relative to $\mathrm{RG}$. In this table, the market VRP is defined
by $\mathrm{VRP}_{t}^{\mathtt{market}}=\mathrm{VIX}_{t}-\sqrt{\mathrm{\widehat{\mathrm{RVcc}}}_{t+1:t+22}}\times100$,
where the annualized volatility, $\sqrt{\mathrm{\widehat{\mathrm{RVcc}}}_{t+1:t+22}}\times100$,
is the predicted value from the HAR model.
\end{table}
\begin{sidewaystable}
\begin{centering}
\centering \caption{Out-of-sample Statistics of Model Performance with Alternative VRP
Specification \label{tab:OOS_TPT-HARtype}}
\par\end{centering}
\begin{centering}
\begin{tabular}{cccccccccccccccc}
\toprule 
 & \multicolumn{5}{c}{\textbf{\small{}Volatility Risk Premium}} & \multicolumn{5}{c}{\textbf{\small{}Volatility Index (VIX)}} & \multicolumn{5}{c}{\textbf{\small{}Annualized Volatility}}\tabularnewline
\midrule 
 &  &  &  &  &  &  &  &  &  &  &  &  &  &  & \tabularnewline
\multicolumn{16}{l}{\textbf{\small{}Full Sample Period: 20070103 - 20181230}}\tabularnewline
\textbf{\textcolor{black}{\small{}Model}} & \textbf{\textcolor{black}{\small{}$\mathrm{RG}$}} & \textbf{\textcolor{black}{\small{}$\mathrm{EG}$}} & \textbf{\textcolor{black}{\small{}$\mathrm{G}$}} & \textbf{\textcolor{black}{\small{}$\mathrm{HN}$}} & \textbf{\textcolor{black}{\small{}$\mathrm{HN}_{\mathtt{vd}}$}} & \textbf{\textcolor{black}{\small{}$\mathrm{RG}$}} & \textbf{\textcolor{black}{\small{}$\mathrm{EG}$}} & \textbf{\textcolor{black}{\small{}$\mathrm{G}$}} & \textbf{\textcolor{black}{\small{}$\mathrm{HN}$}} & \textbf{\textcolor{black}{\small{}$\mathrm{HN}_{\mathtt{vd}}$}} & \textbf{\textcolor{black}{\small{}$\mathrm{RG}$}} & \textbf{\textcolor{black}{\small{}$\mathrm{EG}$}} & \textbf{\textcolor{black}{\small{}$\mathrm{G}$}} & \textbf{\textcolor{black}{\small{}$\mathrm{HN}$}} & \textbf{\textcolor{black}{\small{}$\mathrm{HN}_{\mathtt{vd}}$}}\tabularnewline
\textbf{\textcolor{black}{\small{}Bias}} & \textcolor{black}{\small{}0.450} & \textcolor{black}{\small{}-2.685} & \textcolor{black}{\small{}-2.624} & \textcolor{black}{\small{}-2.676} & \textcolor{black}{\small{}-1.614} & \textcolor{black}{\small{}0.155} & \textcolor{black}{\small{}-0.212} & \textcolor{black}{\small{}0.063} & \textcolor{black}{\small{}-0.533} & \textcolor{black}{\small{}0.132} & \textcolor{black}{\small{}-0.295} & \textcolor{black}{\small{}2.473} & \textcolor{black}{\small{}2.687} & \textcolor{black}{\small{}2.143} & \textcolor{black}{\small{}1.746}\tabularnewline
\textbf{\textcolor{black}{\small{}RMSE}} & \textcolor{black}{\small{}4.189} & \textcolor{black}{\small{}5.017} & \textcolor{black}{\small{}4.783} & \textcolor{black}{\small{}4.901} & \textcolor{black}{\small{}4.841} & \textcolor{black}{\small{}2.870} & \textcolor{black}{\small{}3.259} & \textcolor{black}{\small{}3.278} & \textcolor{black}{\small{}4.298} & \textcolor{black}{\small{}3.903} & \textcolor{black}{\small{}3.681} & \textcolor{black}{\small{}5.068} & \textcolor{black}{\small{}4.284} & \textcolor{black}{\small{}5.764} & \textcolor{black}{\small{}5.663}\tabularnewline
\textbf{\textcolor{black}{\small{}DM stat.}} &  & \textcolor{black}{\small{}3.168} & \textcolor{black}{\small{}2.621} & \textcolor{black}{\small{}2.931} & \textcolor{black}{\small{}3.112} &  & \textcolor{black}{\small{}3.976} & \textcolor{black}{\small{}3.579} & \textcolor{black}{\small{}2.659} & \textcolor{black}{\small{}1.887} &  & \textcolor{black}{\small{}4.285} & \textcolor{black}{\small{}1.377} & \textcolor{black}{\small{}3.180} & \textcolor{black}{\small{}2.898}\tabularnewline
\textbf{\textcolor{black}{\small{}$\triangle\%$}} &  & \textcolor{black}{\small{}16.5\%} & \textcolor{black}{\small{}12.4\%} & \textcolor{black}{\small{}14.5\%} & \textcolor{black}{\small{}13.5\%} &  & \textcolor{black}{\small{}11.9\%} & \textcolor{black}{\small{}12.4\%} & \textcolor{black}{\small{}33.2\%} & \textcolor{black}{\small{}26.5\%} &  & \textcolor{black}{\small{}27.4\%} & \textcolor{black}{\small{}14.1\%} & \textcolor{black}{\small{}36.1\%} & \textcolor{black}{\small{}35.0\%}\tabularnewline
\textbf{\textcolor{black}{\small{}MAE}} & \textcolor{black}{\small{}2.883} & \textcolor{black}{\small{}3.588} & \textcolor{black}{\small{}3.378} & \textcolor{black}{\small{}3.542} & \textcolor{black}{\small{}3.519} & \textcolor{black}{\small{}2.113} & \textcolor{black}{\small{}2.402} & \textcolor{black}{\small{}2.387} & \textcolor{black}{\small{}2.842} & \textcolor{black}{\small{}2.615} & \textcolor{black}{\small{}2.283} & \textcolor{black}{\small{}3.777} & \textcolor{black}{\small{}3.288} & \textcolor{black}{\small{}3.928} & \textcolor{black}{\small{}3.743}\tabularnewline
\textbf{\textcolor{black}{\small{}DM stat.}} &  & \textcolor{black}{\small{}3.127} & \textcolor{black}{\small{}2.370} & \textcolor{black}{\small{}3.095} & \textcolor{black}{\small{}3.951} &  & \textcolor{black}{\small{}3.995} & \textcolor{black}{\small{}3.333} & \textcolor{black}{\small{}4.222} & \textcolor{black}{\small{}3.137} &  & \textcolor{black}{\small{}6.273} & \textcolor{black}{\small{}4.484} & \textcolor{black}{\small{}7.322} & \textcolor{black}{\small{}6.394}\tabularnewline
\textbf{\textcolor{black}{\small{}$\triangle\%$}} &  & \textcolor{black}{\small{}19.7\%} & \textcolor{black}{\small{}14.6\%} & \textcolor{black}{\small{}18.6\%} & \textcolor{black}{\small{}18.1\%} &  & \textcolor{black}{\small{}12.0\%} & \textcolor{black}{\small{}11.5\%} & \textcolor{black}{\small{}25.6\%} & \textcolor{black}{\small{}19.2\%} &  & \textcolor{black}{\small{}39.6\%} & \textcolor{black}{\small{}30.6\%} & \textcolor{black}{\small{}41.9\%} & \textcolor{black}{\small{}39.0\%}\tabularnewline
 &  &  &  &  &  &  &  &  &  &  &  &  &  &  & \tabularnewline
\multicolumn{16}{l}{\textbf{Financial Crisis Period: 2007.01.03 - 2012.12.30}}\tabularnewline
\textbf{\small{}Model} & \textbf{\textcolor{black}{\small{}$\mathrm{RG}$}} & \textbf{\textcolor{black}{\small{}$\mathrm{EG}$}} & \textbf{\textcolor{black}{\small{}$\mathrm{G}$}} & \textbf{\textcolor{black}{\small{}$\mathrm{HN}$}} & \textbf{\textcolor{black}{\small{}$\mathrm{HN}_{\mathtt{vd}}$}} & \textbf{\textcolor{black}{\small{}$\mathrm{RG}$}} & \textbf{\textcolor{black}{\small{}$\mathrm{EG}$}} & \textbf{\textcolor{black}{\small{}$\mathrm{G}$}} & \textbf{\textcolor{black}{\small{}$\mathrm{HN}$}} & \textbf{\textcolor{black}{\small{}$\mathrm{HN}_{\mathtt{vd}}$}} & \textbf{\textcolor{black}{\small{}$\mathrm{RG}$}} & \textbf{\textcolor{black}{\small{}$\mathrm{EG}$}} & \textbf{\textcolor{black}{\small{}$\mathrm{G}$}} & \textbf{\textcolor{black}{\small{}$\mathrm{HN}$}} & \textbf{\textcolor{black}{\small{}$\mathrm{HN}_{\mathtt{vd}}$}}\tabularnewline
\textbf{\small{}Bias} & \textcolor{black}{\small{}0.257} & \textcolor{black}{\small{}-3.043} & \textcolor{black}{\small{}-3.172} & \textcolor{black}{\small{}-3.048} & \textcolor{black}{\small{}-2.109} & \textcolor{black}{\small{}-0.294} & \textcolor{black}{\small{}-0.824} & \textcolor{black}{\small{}-0.484} & \textcolor{black}{\small{}-1.427} & \textcolor{black}{\small{}-0.321} & \textcolor{black}{\small{}-0.551} & \textcolor{black}{\small{}2.219} & \textcolor{black}{\small{}2.688} & \textcolor{black}{\small{}1.621} & \textcolor{black}{\small{}1.788}\tabularnewline
\textbf{\small{}RMSE} & \textcolor{black}{\small{}4.981} & \textcolor{black}{\small{}5.867} & \textcolor{black}{\small{}5.744} & \textcolor{black}{\small{}5.649} & \textcolor{black}{\small{}5.779} & \textcolor{black}{\small{}3.383} & \textcolor{black}{\small{}3.972} & \textcolor{black}{\small{}3.944} & \textcolor{black}{\small{}5.497} & \textcolor{black}{\small{}4.866} & \textcolor{black}{\small{}4.668} & \textcolor{black}{\small{}6.087} & \textcolor{black}{\small{}4.985} & \textcolor{black}{\small{}7.258} & \textcolor{black}{\small{}7.318}\tabularnewline
\textbf{\small{}DM stat.} &  & \textcolor{black}{\small{}2.229} & \textcolor{black}{\small{}2.212} & \textcolor{black}{\small{}1.824} & \textcolor{black}{\small{}2.476} &  & \textcolor{black}{\small{}4.028} & \textcolor{black}{\small{}3.277} & \textcolor{black}{\small{}2.515} & \textcolor{black}{\small{}1.676} &  & \textcolor{black}{\small{}2.841} & \textcolor{black}{\small{}0.448} & \textcolor{black}{\small{}2.562} & \textcolor{black}{\small{}2.557}\tabularnewline
\textbf{\small{}$\triangle\%$} &  & \textcolor{black}{\small{}15.1\%} & \textcolor{black}{\small{}13.3\%} & \textcolor{black}{\small{}11.8\%} & \textcolor{black}{\small{}13.8\%} &  & \textcolor{black}{\small{}14.8\%} & \textcolor{black}{\small{}14.2\%} & \textcolor{black}{\small{}38.5\%} & \textcolor{black}{\small{}30.5\%} &  & \textcolor{black}{\small{}23.3\%} & \textcolor{black}{\small{}6.4\%} & \textcolor{black}{\small{}35.7\%} & \textcolor{black}{\small{}36.2\%}\tabularnewline
\textbf{\small{}MAE} & \textcolor{black}{\small{}3.340} & \textcolor{black}{\small{}4.341} & \textcolor{black}{\small{}4.220} & \textcolor{black}{\small{}4.210} & \textcolor{black}{\small{}4.383} & \textcolor{black}{\small{}2.557} & \textcolor{black}{\small{}3.069} & \textcolor{black}{\small{}2.946} & \textcolor{black}{\small{}3.798} & \textcolor{black}{\small{}3.327} & \textcolor{black}{\small{}2.849} & \textcolor{black}{\small{}4.552} & \textcolor{black}{\small{}3.868} & \textcolor{black}{\small{}4.908} & \textcolor{black}{\small{}4.954}\tabularnewline
\textbf{\small{}DM stat.} &  & \textcolor{black}{\small{}2.978} & \textcolor{black}{\small{}2.761} & \textcolor{black}{\small{}2.819} & \textcolor{black}{\small{}4.246} &  & \textcolor{black}{\small{}4.254} & \textcolor{black}{\small{}2.820} & \textcolor{black}{\small{}3.906} & \textcolor{black}{\small{}2.541} &  & \textcolor{black}{\small{}4.309} & \textcolor{black}{\small{}2.683} & \textcolor{black}{\small{}5.542} & \textcolor{black}{\small{}5.344}\tabularnewline
\textbf{\small{}$\triangle\%$} &  & \textcolor{black}{\small{}23.1\%} & \textcolor{black}{\small{}20.8\%} & \textcolor{black}{\small{}20.6\%} & \textcolor{black}{\small{}23.8\%} &  & \textcolor{black}{\small{}16.7\%} & \textcolor{black}{\small{}13.2\%} & \textcolor{black}{\small{}32.7\%} & \textcolor{black}{\small{}23.1\%} &  & \textcolor{black}{\small{}37.4\%} & \textcolor{black}{\small{}26.3\%} & \textcolor{black}{\small{}41.9\%} & \textcolor{black}{\small{}42.5\%}\tabularnewline
 &  &  &  &  &  &  &  &  &  &  &  &  &  &  & \tabularnewline
\multicolumn{16}{l}{\textbf{\small{}Post-Crisis Period: 2013.01.03-2018.12.30}}\tabularnewline
\textbf{\textcolor{black}{\small{}Model}} & \textbf{\textcolor{black}{\small{}$\mathrm{RG}$}} & \textbf{\textcolor{black}{\small{}$\mathrm{EG}$}} & \textbf{\textcolor{black}{\small{}$\mathrm{G}$}} & \textbf{\textcolor{black}{\small{}$\mathrm{HN}$}} & \textbf{\textcolor{black}{\small{}$\mathrm{HN}_{\mathtt{vd}}$}} & \textbf{\textcolor{black}{\small{}$\mathrm{RG}$}} & \textbf{\textcolor{black}{\small{}$\mathrm{EG}$}} & \textbf{\textcolor{black}{\small{}$\mathrm{G}$}} & \textbf{\textcolor{black}{\small{}$\mathrm{HN}$}} & \textbf{\textcolor{black}{\small{}$\mathrm{HN}_{\mathtt{vd}}$}} & \textbf{\textcolor{black}{\small{}$\mathrm{RG}$}} & \textbf{\textcolor{black}{\small{}$\mathrm{EG}$}} & \textbf{\textcolor{black}{\small{}$\mathrm{G}$}} & \textbf{\textcolor{black}{\small{}$\mathrm{HN}$}} & \textbf{\textcolor{black}{\small{}$\mathrm{HN}_{\mathtt{vd}}$}}\tabularnewline
\textbf{\textcolor{black}{\small{}Bias}} & \textcolor{black}{\small{}0.645} & \textcolor{black}{\small{}-2.326} & \textcolor{black}{\small{}-2.075} & \textcolor{black}{\small{}-2.304} & \textcolor{black}{\small{}-1.120} & \textcolor{black}{\small{}0.606} & \textcolor{black}{\small{}0.403} & \textcolor{black}{\small{}0.611} & \textcolor{black}{\small{}0.363} & \textcolor{black}{\small{}0.585} & \textcolor{black}{\small{}-0.039} & \textcolor{black}{\small{}2.729} & \textcolor{black}{\small{}2.686} & \textcolor{black}{\small{}2.667} & \textcolor{black}{\small{}1.705}\tabularnewline
\textbf{\textcolor{black}{\small{}RMSE}} & \textcolor{black}{\small{}3.206} & \textcolor{black}{\small{}3.989} & \textcolor{black}{\small{}3.571} & \textcolor{black}{\small{}4.014} & \textcolor{black}{\small{}3.670} & \textcolor{black}{\small{}2.242} & \textcolor{black}{\small{}2.339} & \textcolor{black}{\small{}2.434} & \textcolor{black}{\small{}2.591} & \textcolor{black}{\small{}2.602} & \textcolor{black}{\small{}2.302} & \textcolor{black}{\small{}3.782} & \textcolor{black}{\small{}3.442} & \textcolor{black}{\small{}3.709} & \textcolor{black}{\small{}3.250}\tabularnewline
\textbf{\textcolor{black}{\small{}DM stat.}} &  & \textcolor{black}{\small{}2.696} & \textcolor{black}{\small{}1.567} & \textcolor{black}{\small{}2.819} & \textcolor{black}{\small{}2.281} &  & \textcolor{black}{\small{}1.260} & \textcolor{black}{\small{}1.704} & \textcolor{black}{\small{}2.882} & \textcolor{black}{\small{}2.848} &  & \textcolor{black}{\small{}5.427} & \textcolor{black}{\small{}4.927} & \textcolor{black}{\small{}5.640} & \textcolor{black}{\small{}4.597}\tabularnewline
\textbf{\textcolor{black}{\small{}$\triangle\%$}} &  & \textcolor{black}{\small{}19.6\%} & \textcolor{black}{\small{}10.2\%} & \textcolor{black}{\small{}20.1\%} & \textcolor{black}{\small{}12.6\%} &  & \textcolor{black}{\small{}4.2\%} & \textcolor{black}{\small{}7.9\%} & \textcolor{black}{\small{}13.5\%} & \textcolor{black}{\small{}13.9\%} &  & \textcolor{black}{\small{}39.1\%} & \textcolor{black}{\small{}33.1\%} & \textcolor{black}{\small{}37.9\%} & \textcolor{black}{\small{}29.2\%}\tabularnewline
\textbf{\textcolor{black}{\small{}MAE}} & \textcolor{black}{\small{}2.426} & \textcolor{black}{\small{}2.835} & \textcolor{black}{\small{}2.536} & \textcolor{black}{\small{}2.874} & \textcolor{black}{\small{}2.653} & \textcolor{black}{\small{}1.669} & \textcolor{black}{\small{}1.735} & \textcolor{black}{\small{}1.828} & \textcolor{black}{\small{}1.885} & \textcolor{black}{\small{}1.902} & \textcolor{black}{\small{}1.715} & \textcolor{black}{\small{}3.003} & \textcolor{black}{\small{}2.709} & \textcolor{black}{\small{}2.948} & \textcolor{black}{\small{}2.532}\tabularnewline
\textbf{\textcolor{black}{\small{}DM stat.}} &  & \textcolor{black}{\small{}1.389} & \textcolor{black}{\small{}0.426} & \textcolor{black}{\small{}1.543} & \textcolor{black}{\small{}1.206} &  & \textcolor{black}{\small{}0.990} & \textcolor{black}{\small{}1.819} & \textcolor{black}{\small{}2.683} & \textcolor{black}{\small{}2.751} &  & \textcolor{black}{\small{}4.903} & \textcolor{black}{\small{}4.168} & \textcolor{black}{\small{}5.177} & \textcolor{black}{\small{}4.337}\tabularnewline
\textbf{\textcolor{black}{\small{}$\triangle\%$}} &  & \textcolor{black}{\small{}14.4\%} & \textcolor{black}{\small{}4.3\%} & \textcolor{black}{\small{}15.6\%} & \textcolor{black}{\small{}8.6\%} &  & \textcolor{black}{\small{}3.8\%} & \textcolor{black}{\small{}8.7\%} & \textcolor{black}{\small{}11.4\%} & \textcolor{black}{\small{}12.2\%} &  & \textcolor{black}{\small{}42.9\%} & \textcolor{black}{\small{}36.7\%} & \textcolor{black}{\small{}41.8\%} & \textcolor{black}{\small{}32.3\%}\tabularnewline
\bottomrule
\end{tabular}
\par\end{centering}
Note: Let the ``model error'' be the difference between the model-implied
quantity and the market-based quantity. It sample average is denoted
Bias and the rows indicated with ``$\triangle\%$'' state who much
larger the RMSE or MAE is for each alternative models, measured relative
to $\mathrm{RG}$. Diebold and Mariano statistics (DM stat. ) are
computed for the relative MSE or MAE losses, where the standard errors
are calculated with the Parzen kernel with $H=42$ as bandwidth. The
market VRP in this table is defined by $\mathrm{VRP}_{t}^{\mathtt{market}}=\mathrm{VIX}_{t}-\sqrt{\mathrm{\widehat{\mathrm{RVcc}}}_{t+1:t+22}}\times100$,
where the annualized volatility, $\sqrt{\mathrm{\widehat{\mathrm{RVcc}}}_{t+1:t+22}}\times100$,
is the predicted value from the HAR model.
\end{sidewaystable}

\subsection{Robustness Check: Estimation with logarithmic VIX\label{Sec:RobustlogVIX}}

The VIX index is occasionally very volatile and it reached very high
values during the financial crises. We inspect if some of our results
are driven by outliers. Rather than measuring pricing errors using
the level of VIX, we define VIX pricing errors for the logarithmically
transformed VIX. This transformation is a well known method for dimming
the influence of outliers. Table \ref{tab:parameters-logVIX} provides
parameters estimated with logVIX error specifications while in-sample
pricing performances are summarized in \ref{tab:In-sample-fit-logVIX}.
Again, we do not find significant changes in those results.
\begin{table}
\begin{centering}
{\small{}\caption{Full Sample Parameter Estimation with Alternative Pricing Errors for
log-VIX{\label{tab:parameters-logVIX}}}
}{\small\par}
\par\end{centering}
{\small{}\medskip{}
}{\small\par}
\begin{centering}
{\small{}{}}%
\begin{tabular}{c>{\centering}m{2.2cm}>{\centering}m{2.2cm}>{\centering}m{2.2cm}>{\centering}m{2.2cm}>{\centering}m{2.2cm}}
\toprule 
\textcolor{black}{\small{}Model} & {\small{}$\mathrm{RG}$} & {\small{}$\mathrm{EG}$} & {\small{}$\mathrm{G}$} & {\small{}$\mathrm{HN}$} & {\small{}$\mathrm{HN}_{\mathtt{vd}}$}\tabularnewline
\midrule 
\textcolor{black}{\small{}$\lambda$} & \textcolor{black}{\small{}0.014} & \textcolor{black}{\small{}0.137} & \textcolor{black}{\small{}0.303} & \textcolor{black}{\small{}4.673} & \textcolor{black}{\small{}4.969}\tabularnewline
 & \textcolor{black}{\small{}(0.036)} & \textcolor{black}{\small{}(0.010)} & \textcolor{black}{\small{}(0.021)} & \textcolor{black}{\small{}(0.339)} & \textcolor{black}{\small{}(0.796)}\tabularnewline
\textcolor{black}{\small{}$\omega$} & \textcolor{black}{\small{}-0.148} & \textcolor{black}{\small{}-0.120} & \textcolor{black}{\small{}1.60E-06} & \textcolor{black}{\small{}-6.08E-07} & \textcolor{black}{\small{}-9.63E-07}\tabularnewline
 & \textcolor{black}{\small{}(0.014)} & \textcolor{black}{\small{}(0.009)} & \textcolor{black}{\small{}(8.88E-08)} & \textcolor{black}{\small{}(1.54E-08)} & \textcolor{black}{\small{}(2.73E-08)}\tabularnewline
\textcolor{black}{\small{}$\beta$} & \textcolor{black}{\small{}0.985} & \textcolor{black}{\small{}0.986} & \textcolor{black}{\small{}0.930} & \textcolor{black}{\small{}0.955} & \textcolor{black}{\small{}0.950}\tabularnewline
 & \textcolor{black}{\small{}(0.001)} & \textcolor{black}{\small{}9.09E-04} & \textcolor{black}{\small{}(0.005)} & \textcolor{black}{\small{}(0.001)} & \textcolor{black}{\small{}(0.000)}\tabularnewline
\textcolor{black}{\small{}$\alpha$} &  &  & \textcolor{black}{\small{}0.064} & \textcolor{black}{\small{}1.70E-06} & \textcolor{black}{\small{}1.33E-06}\tabularnewline
 &  &  & \textcolor{black}{\small{}(0.004)} & \textcolor{black}{\small{}(1.84E-08)} & \textcolor{black}{\small{}(2.87E-09)}\tabularnewline
\textcolor{black}{\small{}$\delta$} &  &  &  & \textcolor{black}{\small{}150.120} & \textcolor{black}{\small{}182.711}\tabularnewline
 &  &  &  & \textcolor{black}{\small{}(2.206)} & \textcolor{black}{\small{}(0.988)}\tabularnewline
\textcolor{black}{\small{}$\tau_{1}$} & \textcolor{black}{\small{}-0.082} & \textcolor{black}{\small{}-0.052} &  &  & \tabularnewline
 & \textcolor{black}{\small{}(0.005)} & \textcolor{black}{\small{}(0.004)} &  &  & \tabularnewline
\textcolor{black}{\small{}$\tau_{2}$} & \textcolor{black}{\small{}0.018} & \textcolor{black}{\small{}0.103} &  &  & \tabularnewline
 & \textcolor{black}{\small{}(0.003)} & \textcolor{black}{\small{}(0.003)} &  &  & \tabularnewline
\textcolor{black}{\small{}$\gamma$} & \textcolor{black}{\small{}0.117} &  &  &  & \tabularnewline
 & \textcolor{black}{\small{}(0.010)} &  &  &  & \tabularnewline
\textcolor{black}{\small{}$\kappa$} & \textcolor{black}{\small{}0.303} &  &  &  & \tabularnewline
 & \textcolor{black}{\small{}(0.230)} &  &  &  & \tabularnewline
\textcolor{black}{\small{}$\phi$} & \textcolor{black}{\small{}1.063} &  &  &  & \tabularnewline
 & \textcolor{black}{\small{}(0.025)} &  &  &  & \tabularnewline
\textcolor{black}{\small{}$\delta_{1}$} & \textcolor{black}{\small{}-0.088} &  &  &  & \tabularnewline
 & \textcolor{black}{\small{}(0.012)} &  &  &  & \tabularnewline
\textcolor{black}{\small{}$\delta_{2}$} & \textcolor{black}{\small{}0.127} &  &  &  & \tabularnewline
 & \textcolor{black}{\small{}(0.010)} &  &  &  & \tabularnewline
\textcolor{black}{\small{}$\sigma^{2}$} & \textcolor{black}{\small{}0.308} &  &  &  & \tabularnewline
 & \textcolor{black}{\small{}(0.009)} &  &  &  & \tabularnewline
\textcolor{black}{\small{}$-\xi$} & \textcolor{black}{0.790} &  &  &  & \textcolor{black}{1.191}\tabularnewline
 & \textcolor{black}{\small{}(0.086)} &  &  &  & \textcolor{black}{\small{}(0.022)}\tabularnewline
\textcolor{black}{\small{}$\pi^{\mathbb{P}}$} & \textcolor{black}{\small{}0.985} & \textcolor{black}{\small{}0.986} & \textcolor{black}{\small{}0.994} & \textcolor{black}{\small{}0.994} & \textcolor{black}{\small{}0.995}\tabularnewline
{\small{}$\ell_{r}$} & \textcolor{black}{\small{}12895.03} & \textcolor{black}{\small{}12660.02} & \textcolor{black}{\small{}12522.64} & \textcolor{black}{\small{}12517.70} & \textcolor{black}{\small{}12651.56}\tabularnewline
{\small{}$\ell_{x}$} & \textcolor{black}{\small{}-3129.50} &  &  &  & \tabularnewline
\textcolor{black}{\small{}$\ell_{\mathtt{vix}}$} & \textcolor{black}{\small{}2360.77} & \textcolor{black}{\small{}1703.01} & \textcolor{black}{\small{}1883.67} & \textcolor{black}{\small{}1435.09} & \textcolor{black}{\small{}1574.06}\tabularnewline
\textcolor{black}{\small{}$\ell_{r,x}$} & \textcolor{black}{\small{}9765.52} &  &  &  & \tabularnewline
\textcolor{black}{\small{}$\ell_{r,\mathtt{vix}}$} & \textcolor{black}{\small{}15255.796} & \textcolor{black}{\small{}14363.024} & \textcolor{black}{\small{}14406.315} & \textcolor{black}{\small{}13952.796} & \textcolor{black}{\small{}14225.623}\tabularnewline
\textcolor{black}{\small{}$\ell_{r,x,\mathtt{vix}}$} & \textcolor{black}{\small{}12126.292} &  &  &  & \tabularnewline
\bottomrule
\end{tabular}
\par\end{centering}
\textcolor{black}{\small{}Note: Robust standard errors are in parenthesis.
}{\small{}The persistence parameter }\textcolor{black}{\small{}$\pi^{\mathbb{P}}$}{\small{}
is measured by $\beta+\phi\gamma$ for $\mathrm{RG}_{\mathtt{r}}$
and $\beta$ for all other models. }\textcolor{black}{\small{}For
the model }{\small{}$\mathrm{HN}_{\mathtt{vd}}$}\textcolor{black}{\small{}
we report the value of $(1+2\alpha\xi)^{-1}$ in place of $-\xi$
(the implied value for $\xi$ is here -}\textcolor{black}{121098.46}\textcolor{black}{\small{}).}{\small\par}
\end{table}
\begin{table}
\caption{In-sample Statistics of Model Performance with Alternative Error Specification\label{tab:In-sample-fit-logVIX}}

\medskip{}

\begin{centering}
\begin{tabular}{cccccc}
\toprule 
\textbf{\small{}Model} & \textbf{\small{}$\mathrm{RG}$} & \textbf{\small{}$\mathrm{EG}$} & \textbf{\small{}$\mathrm{G}$} & \textbf{\small{}$\mathrm{HN}$} & \textbf{\small{}$\mathrm{HN}_{\mathtt{vd}}$}\tabularnewline
\midrule 
\multicolumn{3}{l}{\textbf{\small{}Volatility Risk Premium}} &  &  & \tabularnewline
\midrule 
\textcolor{black}{\small{}Bias} & \textcolor{black}{\small{}-0.561} & \textcolor{black}{\small{}-4.092} & \textcolor{black}{\small{}-4.326} & \textcolor{black}{\small{}-4.627} & \textcolor{black}{\small{}-3.697}\tabularnewline
\textcolor{black}{\small{}RMSE} & \textcolor{black}{\small{}3.664} & \textcolor{black}{\small{}5.268} & \textcolor{black}{\small{}5.456} & \textcolor{black}{\small{}5.679} & \textcolor{black}{\small{}4.971}\tabularnewline
\textcolor{black}{\small{}$\triangle\%$} &  & \textcolor{black}{\small{}30.5\%} & \textcolor{black}{\small{}32.9\%} & \textcolor{black}{\small{}35.5\%} & \textcolor{black}{\small{}26.3\%}\tabularnewline
\textcolor{black}{\small{}MAE} & \textcolor{black}{\small{}2.547} & \textcolor{black}{\small{}4.509} & \textcolor{black}{\small{}4.712} & \textcolor{black}{\small{}4.944} & \textcolor{black}{\small{}4.105}\tabularnewline
\textcolor{black}{\small{}$\triangle\%$} &  & \textcolor{black}{\small{}43.5\%} & \textcolor{black}{\small{}45.9\%} & \textcolor{black}{\small{}48.5\%} & \textcolor{black}{\small{}37.9\%}\tabularnewline
 &  &  &  &  & \tabularnewline
\multicolumn{3}{l}{\textbf{\textcolor{black}{\small{}Volatility Index (VIX)}}} &  &  & \tabularnewline
\midrule 
\textcolor{black}{\small{}Bias} & \textcolor{black}{\small{}-0.202} & \textcolor{black}{\small{}-0.434} & \textcolor{black}{\small{}-0.398} & \textcolor{black}{\small{}-0.856} & \textcolor{black}{\small{}0.052}\tabularnewline
\textcolor{black}{\small{}RMSE} & \textcolor{black}{\small{}2.729} & \textcolor{black}{\small{}3.134} & \textcolor{black}{\small{}3.077} & \textcolor{black}{\small{}4.334} & \textcolor{black}{\small{}3.754}\tabularnewline
\textcolor{black}{\small{}$\triangle\%$} &  & \textcolor{black}{\small{}12.9\%} & \textcolor{black}{\small{}11.3\%} & \textcolor{black}{\small{}37.0\%} & \textcolor{black}{\small{}27.3\%}\tabularnewline
\textcolor{black}{\small{}MAE} & \textcolor{black}{\small{}1.958} & \textcolor{black}{\small{}2.287} & \textcolor{black}{\small{}2.191} & \textcolor{black}{\small{}2.528} & \textcolor{black}{\small{}2.386}\tabularnewline
\textcolor{black}{\small{}$\triangle\%$} &  & \textcolor{black}{\small{}14.4\%} & \textcolor{black}{\small{}10.6\%} & \textcolor{black}{\small{}22.5\%} & \textcolor{black}{\small{}17.9\%}\tabularnewline
 &  &  &  &  & \tabularnewline
\multicolumn{3}{l}{\textbf{\textcolor{black}{\small{}Annualized Volatility}}} &  &  & \tabularnewline
\midrule 
\textcolor{black}{\small{}Bias} & \textcolor{black}{\small{}0.359} & \textcolor{black}{\small{}3.658} & \textcolor{black}{\small{}3.928} & \textcolor{black}{\small{}3.771} & \textcolor{black}{\small{}3.748}\tabularnewline
\textcolor{black}{\small{}RMSE} & \textcolor{black}{\small{}3.531} & \textcolor{black}{\small{}4.608} & \textcolor{black}{\small{}4.524} & \textcolor{black}{\small{}5.984} & \textcolor{black}{\small{}5.665}\tabularnewline
\textcolor{black}{\small{}$\triangle\%$} &  & \textcolor{black}{\small{}23.4\%} & \textcolor{black}{\small{}22.0\%} & \textcolor{black}{\small{}41.0\%} & \textcolor{black}{\small{}37.7\%}\tabularnewline
\textcolor{black}{\small{}MAE} & \textcolor{black}{\small{}2.340} & \textcolor{black}{\small{}4.139} & \textcolor{black}{\small{}4.101} & \textcolor{black}{\small{}5.009} & \textcolor{black}{\small{}4.601}\tabularnewline
\textcolor{black}{\small{}$\triangle\%$} &  & \textcolor{black}{\small{}43.5\%} & \textcolor{black}{\small{}42.9\%} & \textcolor{black}{\small{}53.3\%} & \textcolor{black}{\small{}49.1\%}\tabularnewline
\bottomrule
\end{tabular}
\par\end{centering}
Note: Bias is defined as model generated value minus their market
counterpart; The rows starting with $\triangle\%$ present the increase
in percentage of RMSE and MAE for the competing models relative to
$\mathrm{RG}$.
\end{table}

\subsection{Robustness Check: Alternative Realized Measures\label{Sec:Robustalterrms}}

A wide range of realized measures of volatility have been proposed
since the use of the realized variance was popularized by \citet{AB1998}.
As alternative measures we use the realized variance based on both
5-minute and 10-minute returns, denoted $\mathrm{RV}{}_{5\mathrm{m}}$
and $\mathrm{RV}{}_{10\mathrm{m}}$, respectively. We also consider
the bi-power variation, $\mathrm{BV}{}_{5\mathrm{m}}$, by \citet{BN2004},
which is computed from 5-minute returns, and three different realized
kernels, denoted $\mathrm{RK}_{\mathrm{P}}$, $\mathrm{RK}_{\mathrm{TH}}$,
and $\mathrm{RK}_{\mathrm{B}}$, that are based on different kernel
function. $\mathrm{RK}_{\mathrm{P}}$ is based on the Parzen kerne,
$\mathrm{RK}_{\mathrm{TH}}$, on the modified Tukey-Hanning, which
is denoted $\mathrm{TH_{2}}$ in \citet{BHLS2008}, and $\mathrm{RK}_{\mathrm{B}}$
is the realized kernel estimator based on the Bartlett kernel function.
Finally, the median-based realized variance., by \citet{AndersenDobrevSchaumburg2012}
is denoted $\mathrm{MedRV}$. The realized measures were obtained
from the Realized Library at Oxford-Man institute (version 0.3). 

Table \ref{tab:In-sample-fit-AlterRMs} provides in-sample pricing
performances of the Realized GARCH model based on different realized
measures. The results are quite similar across with other realized
measures. For the VRP we observed some minor improvements over $\mathrm{RV}{}_{5\mathrm{m}}$,
but the relative differences are quite small. Interestingly, the two
jump robust estimators, $\mathrm{BV}{}_{5\mathrm{m}}$ and $\mathrm{MedRV}$,
are somewhat wors at fitting the VIX and Annualize Volatility, which
indicates that including the jump component is useful in this context.
However, for the VRP both jump-robust measures perform similarly to
other realized measures. The $\mathrm{RV}{}_{5\mathrm{m}}$ seems
to be adequate for the present modeling problem, as previously argued
in \citet{LiuPattonSheppard2015} for a range of problems. 
\begin{table}[H]
\caption{In-sample Statistics of Model Performance with Alternative Realized
Measures\label{tab:In-sample-fit-AlterRMs}}

\medskip{}

\begin{centering}
\begin{tabular}{cccccccc}
\toprule 
\textbf{Model} & $\mathrm{RV}{}_{5\mathrm{m}}$ & $\mathrm{RV}{}_{10\mathrm{m}}$ & $\mathrm{BV}{}_{5\mathrm{m}}$ & $\mathrm{RK}_{\mathrm{P}}$ & $\mathrm{RK}_{\mathrm{TH}}$ & $\mathrm{RK}_{\mathrm{B}}$ & $\mathrm{MedRV}$\tabularnewline
\midrule 
\multicolumn{3}{l}{\textbf{Volatility Risk Premium}} &  &  &  &  & \tabularnewline
\midrule 
Bias & -0.498 & -0.497 & -0.442 & -0.514 & -0.486 & -0.482 & -0.520\tabularnewline
RMSE & 3.825 & 3.838 & 3.818 & 3.843 & 3.764 & 3.764 & 3.844\tabularnewline
$\triangle\%$ &  & 0.34\% & -0.18\% & 0.48\% & -1.62\% & -1.60\% & 0.49\%\tabularnewline
MAE & 2.635 & 2.646 & 2.613 & 2.649 & 2.586 & 2.585 & 2.657\tabularnewline
$\triangle\%$ &  & 0.42\% & -0.81\% & 0.55\% & -1.90\% & -1.91\% & 0.84\%\tabularnewline
 & \multicolumn{1}{c}{} &  &  &  &  &  & \tabularnewline
\multicolumn{3}{l}{\textbf{Volatility Index (VIX)}} &  &  &  &  & \tabularnewline
\midrule 
Bias & -0.048 & -0.047 & -0.036 & -0.032 & -0.047 & -0.046 & -0.037\tabularnewline
RMSE & 2.504 & 2.520 & 2.668 & 2.575 & 2.548 & 2.547 & 2.808\tabularnewline
$\triangle\%$ &  & 0.63\% & 6.15\% & 2.77\% & 1.75\% & 1.72\% & 10.84\%\tabularnewline
MAE & 1.866 & 1.896 & 1.985 & 1.920 & 1.917 & 1.912 & 2.078\tabularnewline
$\triangle\%$ &  & 1.59\% & 6.00\% & 2.80\% & 2.64\% & 2.38\% & 10.20\%\tabularnewline
 & \multicolumn{1}{c}{} &  &  &  &  &  & \tabularnewline
\multicolumn{3}{l}{\textbf{Annualized Volatility}} &  &  &  &  & \tabularnewline
\midrule 
Bias & 0.449 & 0.450 & 0.406 & 0.481 & 0.439 & 0.436 & 0.483\tabularnewline
RMSE & 2.942 & 2.911 & 3.066 & 2.909 & 3.171 & 3.165 & 3.048\tabularnewline
$\triangle\%$ &  & -1.08\% & 4.02\% & -1.16\% & 7.22\% & 7.02\% & 3.47\%\tabularnewline
MAE & 2.002 & 1.977 & 2.130 & 2.013 & 2.190 & 2.185 & 2.134\tabularnewline
$\triangle\%$ &  & -1.28\% & 6.02\% & 0.53\% & 8.57\% & 8.40\% & 6.17\%\tabularnewline
\bottomrule
\end{tabular}
\par\end{centering}
Note: Empirical results based on alternative realized volatility measures.
$\mathrm{RV}{}_{5\mathrm{m}}$ and $\mathrm{RV}{}_{10\mathrm{m}}$
are the realized variances based on 5 minute returns and 10 minute
returns, respectively, $\mathrm{BV}{}_{5\mathrm{m}}$ is the bipower
variation based on five-minute returns,$\mathrm{RK}_{\mathrm{P}}$,
$\mathrm{RK}_{\mathrm{TH}}$, and $\mathrm{RK}_{\mathrm{B}}$ are
realized kernels estimator based the the Parzen, a modified Tukey-Hanning,
and the Bartlett kernel functions. Finally, $\mathrm{MedRV}$ is the
median based realized variance. The rows starting with $\triangle\%$
present the increase in percentage of RMSE and MAE for the competing
models relative to the one using 5min RV.
\end{table}

\newpage
\end{document}